\newtheorem{thrm}{Theorem}[section]
\theoremstyle{definition}
\numberwithin{equation}{section}
\author{Dejan Kovacevic}
\address{
21 Neilson Drive
Toronto, M9C 1V3 Canada\\
orcid.org/0000-0003-2214-7380}
\email{kodza@yahoo.com}
\keywords{Navier-Stokes, Clay Mathematics Institute, Incompressible Fluid}
\subjclass{Primary 76D05, Secondary 35Q30}
\begin{document}
\title{Incompressible Navier-Stokes Equations:  Example of no solution at $R^3$ and $t=0$}
\maketitle
\begin{abstract}
We provide an example of a smooth, divergence-free $\nabla \cdot \vec{u}(\vec{x})=0$ velocity vector field $\vec{u}(\vec{x})$ for  incompressible fluid occupying all of $R^{3}$ space, and smooth vector field $\vec{f}(\vec{x}, t)$ for which the Navier-Stokes equation for incompressible fluid does not have a solution for any position in space $\vec{x}\in R^{3} $ at $t=0$. The velocity vector field $u_{i} (\vec{x})=2\frac{x_{h(i-1)} -x_{h(i+1)} }{\left(1+\sum _{j=1}^{3}x_{j}^{2}  \right)^{2} }$ ; $i=\{ 1,2,3\} $ where $h(l)=\left\{\begin{array}{ccc} {l} & {;1\le l\le 3} & {} \\ {1} & {;l=4} & {} \\ {3} & {;l=0} & {} \end{array}\right.$ is smooth, divergence-free, continuously differentiable $u(\vec{x})\in C^{\infty }$, has bounded energy $\int _{R^{3} }\left|\vec{u}\right|^{2} dx=\pi ^{2}$, zero velocity at coordinate origin, and velocity converges to zero for $\left|\vec{x}\right|\to \infty$. The vector field $\vec{f}(\vec{x},t)=(0,0,\frac{1}{1+t^{2} (\sum _{j=1}^{3}x_{j}  )^{2} )} $ is smooth, continuously differentiable $f(\vec{x},t)\in C^{\infty }$, converging to zero for $\left|\vec{x}\right|\to \infty$. Applying $\vec{u}(\vec{x})$ and $\vec{f}(\vec{x}, t)$ in the Navier-Stokes equation for incompressible fluid results with three mutually different solutions for pressure $p(\vec{x}, t)$, one of which includes zero division with zero $\frac{0}{0}$ term at $t=0$, which is indeterminate for all positions $\vec{x} \in R^{3}$.
\end{abstract}
\maketitle
\section{Introduction}
An important and still unresolved problem in fluid dynamics is the question of global regularity in three dimensional Euclidian $R^{3}$ space, utilizing Navier-Stokes equation for incompressible fluid:
\[\frac{\partial \vec{u}}{\partial t} +(\vec{u}\cdot \nabla )\vec{u}=-\frac{\nabla p}{\rho } +\nu \Delta \vec{u}+\vec{f}\]
for $\nabla \cdot \vec{u}=0$ at any position in space $\vec{x}\in R^{3} $ at any time $t\ge 0$.
Fluid velocity, a physical quantity representing the ratio between fluid parcel spatial position change and the increment of elapsed time, satisfying realistic boundary conditions- could conceivably develop singularity in finite time. The literature refers to such phenomenon as a "blow-up", as, over some finite time, a mathematical representation of fluid velocity and it’s derivatives, could reach values corresponding to physically unreasonable results. As per the Clay Mathematics Institute official existence and smoothness of the Navier-Stokes equation problem statement: \emph{"A fundamental problem in analysis is to decide whether such smooth, \textbf{physically reasonable} solutions exist for the Navier–Stokes equations"}. In addition, it is not known if a smooth, divergence-fee vector field $\vec{u}^{0}(\vec{x})$ and smooth $\vec{f}(\vec{x},t)$ exist, for which there exist no solutions for pressure $p(\vec{x}, t)$ and given velocity vector field $\vec{u}$ at any position in space $\vec{x}\in R^{3} $ at $t=0$.
\newline
In this paper we share specific example of the fluid velocity vector field $\vec{u}^{0}(\vec{x})=\vec{u}(\vec{x})$ for fluid occupying all of $R^{3}$ space, as well as the $\vec{f}(\vec{x},t)$ vector field, for which we prove that the Navier-Stokes equation for incompressible fluid does not have a solution at any position in space $\vec{x} \in R^{3}$ at $t=0$.
\newline
We present single theorem which encapsulates the following approach and results:
\newline
\textbf{Definition of vector fields:}
The theorem statement includes definition of fluid velocity vector field as:
\begin{wrapfigure}{r}{0.3\textwidth}
  \begin{center}
    \includegraphics[width=0.3\textwidth]{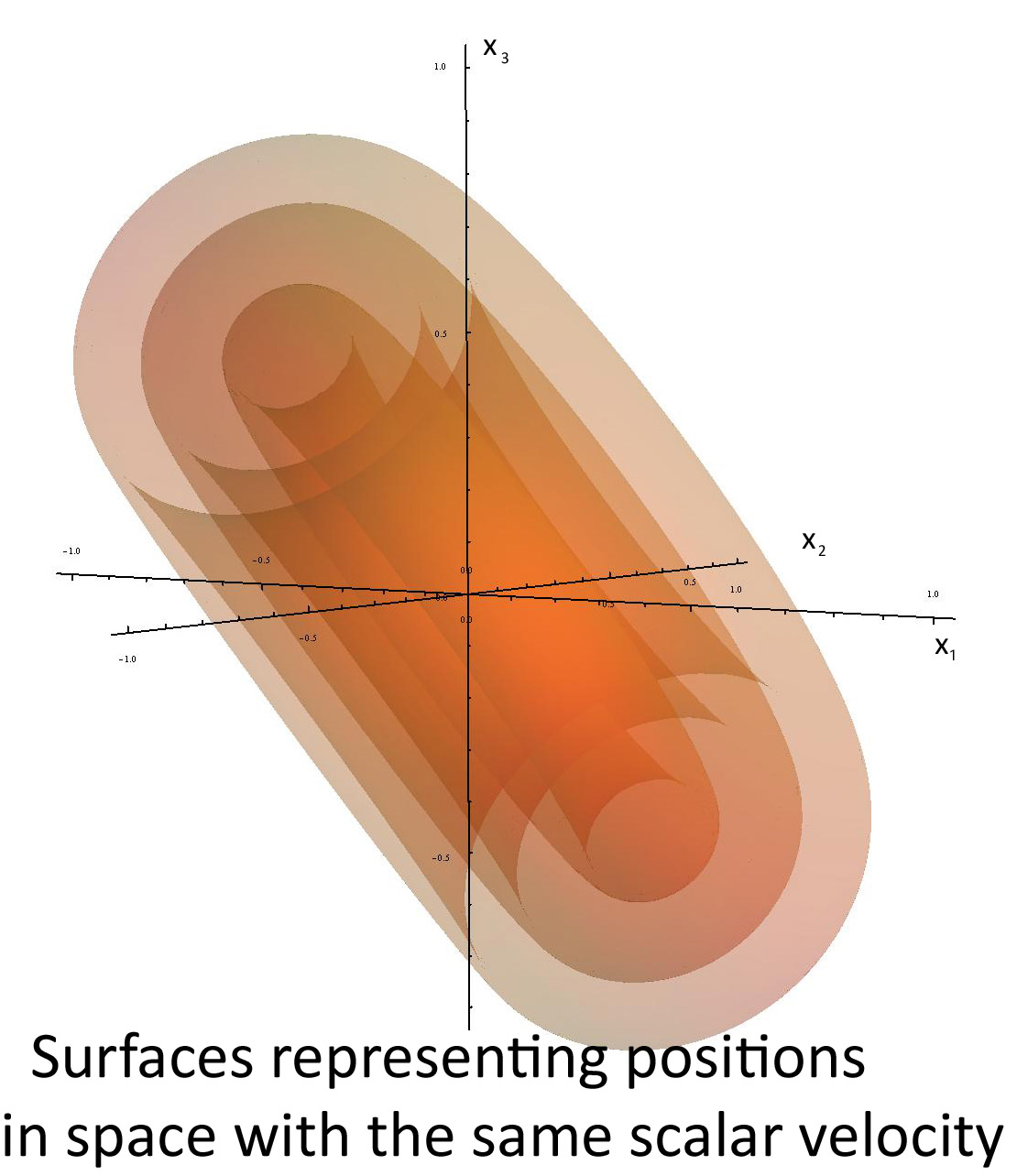}
    \label{fig:databaseUserTable}
  \end{center}
\end{wrapfigure}
\[u_{i} (\vec{x})=2\frac{x_{h(i-1)} -x_{h(i+1)} }{\left(1+\sum _{j=1}^{3}x_{j}^{2}  \right)^{2} } \]
for $i=\{ 1,2,3\} $ where $h(l)$ is defined as:
\[h(l)=\left\{\begin{array}{ccc} {l} & {;1\le l\le 3} & {} \\ {1} & {;l=4} & {} \\ {3} & {;l=0} & {} \end{array}\right. \]
for any $\vec{x}\in R^{3}$.
\newline
Also, we define external force related vector field as:
\[\vec{f}(\vec{x},t)=(0,0,\frac{1}{1+t^{2} (\sum _{j=1}^{3}x_{j}  )^{2}})\]
for any $\vec{x}\in R^{3}$ and $t\ge 0$.
\newline
\textbf{Analysis of $\vec{u}(\vec{x})$ vector field:}
Starting from the $\vec{u}(\vec{x})$ definition, we analyze the fluid velocity vector field:
\begin{itemize}
    \item Proving that the velocity vector field $\vec{u}(\vec{x})$ is divergence free $\nabla\cdot\vec{u}=\vec{0}$.
    \item Integrating square of fluid scalar velocity $\left|\vec{u}\right|^{2}$ over the whole $R^{3}$ space in order to verify energy as bounded and finite. The result of integration is $\int _{R^{3} }\left|\vec{u}\right|^{2} dx=\pi ^{2}$, which is constant.
    \item Proving that velocity vector field $\vec{u}(\vec{x})$ is continuously differentiable $\vec{u}(\vec{x})\in C^{\infty }$.
    \item Deriving the general form of the partial derivative $\frac{\partial^{\alpha} \vec{u}}{\partial x_{k}^{\alpha}}$ of any order $\alpha$, $k=\{1,2,3\}$.
    \item Proving that the partial derivative $\frac{\partial^{\alpha} \vec{u}}{\partial x_{k}^{\alpha}}$ of any order $\alpha$, $k=\{1,2,3\}$, is zero vector at coordinate origin $\vec{x}=\vec{0}$ and that it converges to zero vector when $\left|\vec{x}\right| \to \infty$.
    \item Proving that partial derivative $\frac{\partial^{\alpha} \vec{u}}{\partial x_{k}^{\alpha}}$ of any order $\alpha$, $k=\{1,2,3\}$ must be finite $\left|\frac{\partial^{\alpha} \vec{u}}{\partial x_{k}^{\alpha}}\right|\leq C_{\alpha}$ for $C_{\alpha} \in R$, for any position in space $\vec{x}\in R^{3}$.
\end{itemize}
\begin{wrapfigure}{r}{0.3\textwidth}
  \begin{center}
    \includegraphics[width=0.3\textwidth]{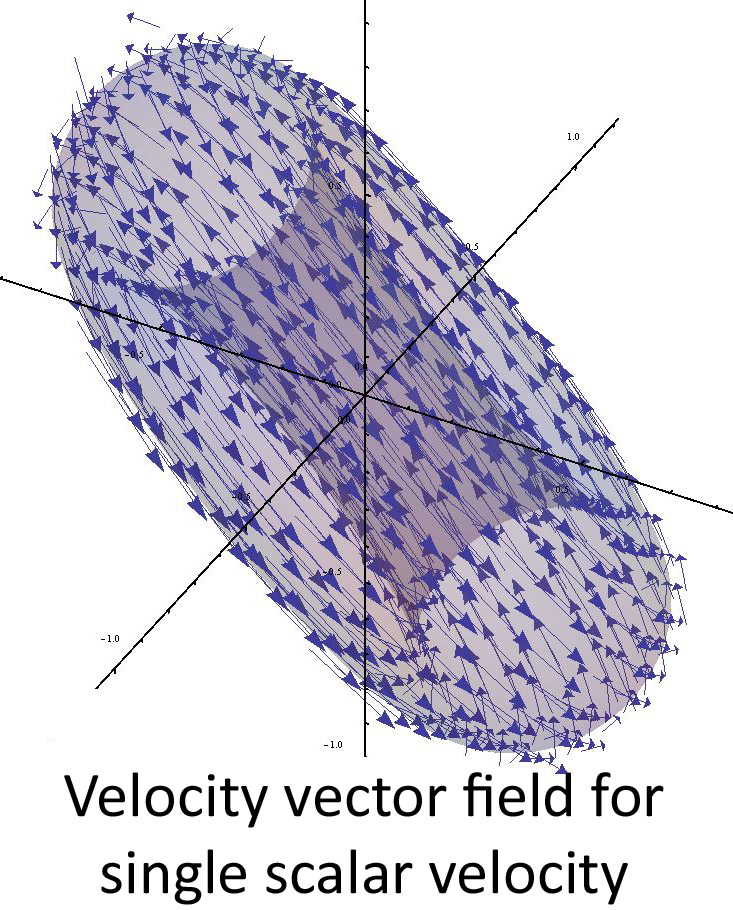}
    \label{fig:databaseUserTable2}
  \end{center}
\end{wrapfigure}
\textbf{Analysis of $\vec{f}(\vec{x},t)$ vector field:}
We perform an analysis for the external force vector field $\vec{f}(\vec{x},t)$:
\begin{itemize}
    \item Proving that vector field $\vec{f}(\vec{x},t)$ is continuously differentiable $\vec{f}(\vec{x})\in C^{\infty }$.
    \item Deriving $ \deg _{x} (\frac{\partial ^{\alpha } }{\partial x^{\alpha } } \frac{\partial ^{m} }{\partial t^{m} }\vec{f})$ of any order $\alpha$, $m$
      \item Deriving $ \deg _{t} (\frac{\partial ^{\alpha } }{\partial x^{\alpha } } \frac{\partial ^{m} }{\partial t^{m} }\vec{f})$ of any order $\alpha$, $m$
    \item Proving that the partial derivative $\frac{\partial ^{\alpha } }{\partial x^{\alpha } } \frac{\partial ^{m} }{\partial t^{m} }\vec{f}$ of any order $\alpha$, $m$ is zero vector at the coordinate origin $\vec{x}=\vec{0}$ and that it converges to zero vector when $\left|\vec{x}\right| \to \infty$.
\end{itemize}
\textbf{Appllication of $\vec{u}(\vec{x})$ and  $\vec{f}(\vec{x},t)$ in the Navier-Stokes equation and solving for pressure $p(\vec{x}, t)$:}
We apply the velocity vector field $\vec{u}(\vec{x})$ and $\vec{f}(\vec{x},t)$ in the Navier-Stokes equation for incompressible fluid, obtaining the following results:
\begin{itemize}
    \item We apply $\vec{u}(\vec{x})$ and $\vec{f}(\vec{x},t)$ to each applicable term of the Navier-Stokes equation, and we re-arrange the equation so that $\frac{\nabla p}{\rho}$ is the only term on one side of the resulting equation.
    \item As $\nabla p=(\frac{\partial}{\partial x_{1}},\frac{\partial}{\partial x_{2}},\frac{\partial}{\partial x_{3}})p$, we integrate the resulting vector field components for $x_{1}$,$x_{2}$,$x_{3}$ expecting that the resulting pressure $p(\vec{x}, t)$ must be the same for all three integrations performed.
    \item Once the three integrations are performed for $x_{1}$,$x_{2}$,$x_{3}$, we obtain the three mutually different results for pressure $p(\vec{x}, t)$ for all positions in space $\vec{x}\in R^{3}$ and for $t\ge 0$. In addition, one of the resulting pressure solutions includes the term $\frac{ArcTan(t (x_{1}+x_{2}+x_{3}))}{t}$ which at $t=0$ evaluates to $\frac{0}{0}$, which is indeterminate for all positions in space $\vec{x}\in R^{3}$ at $t=0$.
\end{itemize}
\textbf{Conclusion:} Based on the analysis performed and the obtained results, we conclude that for fluid velocity $\vec{u}(\vec{x})$ and $\vec{f}(\vec{x},t)$ vector fields as specified, the Navier-Stokes equation for incompressible fluid does not have a solution for all positions in space $\vec{x}\in R^{3}$ at $t=0$.
\begin{thrm}\label{T0}
\noindent Let $\vec{u}(\vec{x})=(u_{i} (\vec{x}))_{1\le i\le 3} \in R^{3}$ be divergence-free $\nabla \cdot \vec{u}=0$ velocity vector field for incompressible fluid $\rho =const$ occupying all of $R^{3}$ space, where $p(\vec{x},t)\in R$ represents fluid pressure and $\vec{f}(\vec{x},t)$ represents the vector field related to the external force applied on fluid, for any position $\vec{x}\in R^{3} $ in space at $t\ge 0$.
\newline Than the Navier-Stokes equation for incompressible fluid
\[\frac{\partial \vec{u}}{\partial t} +(\vec{u}\cdot \nabla )\vec{u}=-\frac{\nabla p}{\rho } +\nu \Delta \vec{u}+\vec{f}\]
\textbf{does not have solution} for all positions in space $\vec{x}\in R^{3} $ at $t=0$, for vector fields $\vec{u}(\vec{x})$ and $\vec{f}(\vec{x},t)$ defined as:
\[u_{i} (\vec{x})=2\frac{x_{h(i-1)} -x_{h(i+1)} }{\left(1+\sum _{j=1}^{3}x_{j}^{2}  \right)^{2}}\quad ;i=\{1,2,3\}\]
where $h(l)$ is defined as
\[h(l)=\left\{\begin{array}{ccc} {l} & {;1\le l\le 3} & {} \\ {1} & {;l=4} & {} \\ {3} & {;l=0} & {} \end{array}\right.\]
\noindent and vector field $\vec{f}(\vec{x},t)$ defined as
\[\vec{f}(\vec{x},t)=(0,0,\frac{1}{(1+t^{2} (\sum _{j=1}^{3}x_{j}  )^{2} )} )\]
\end{thrm}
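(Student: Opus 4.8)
The plan is to reduce the existence of a pressure to a pointwise compatibility (integrability) condition and then test that condition against the explicit fields. Because the prescribed velocity $\vec{u}(\vec{x})$ carries no time dependence, the unsteady term drops, $\partial_t\vec{u}=\vec{0}$, and the momentum equation rearranges so that the gradient of the pressure stands alone:
\[
\frac{\nabla p}{\rho}=-(\vec{u}\cdot\nabla)\vec{u}+\nu\,\Delta\vec{u}+\vec{f}(\vec{x},t)=:\vec{G}(\vec{x},t).
\]
Every term on the right is fully explicit and smooth, using the closed-form derivatives, decay, and differentiability of $\vec{u}$ and $\vec{f}$ established in the preceding sections. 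A scalar field $p(\vec{x},t)$ with $\nabla p=\rho\,\vec{G}$ exists for a fixed $t$ if and only if $\vec{G}(\cdot,t)$ is a gradient, equivalently $\nabla\times\vec{G}=\vec{0}$; I would make this the organizing principle of the argument.

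Next I would carry out the three component integrations demanded by $\nabla p=\rho\,\vec{G}$, namely $p=\rho\int G_k\,dx_k$ (up to a function of the remaining variables and $t$) for $k=1,2,3$, and compare the outcomes. The advective and viscous contributions are rational in $\vec{x}$ and integrate to elementary expressions, so the delicate feature is the forcing, which enters only the third component. Writing $s=x_1+x_2+x_3$, the $x_3$-antiderivative of $f_3=(1+t^2s^2)^{-1}$ is $\tfrac{1}{t}\arctan(t s)$, so the third candidate pressure alone carries the term $\tfrac{1}{t}\arctan\!\big(t(x_1+x_2+x_3)\big)$. Since $f_3$ depends on $x_1$ and $x_2$ while $f_1=f_2=0$, the forcing is manifestly not curl-free on its own, and the heart of the proof is to show that the advective and viscous parts do not cancel this defect, so that the three integrations cannot be reconciled into one scalar $p$.

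The step I expect to be the main obstacle is justifying that this genuinely blocks a solution \emph{at} $t=0$ rather than merely away from it. The term $\tfrac{1}{t}\arctan(t s)$ presents the indeterminate form $\tfrac{0}{0}$ at $t=0$, yet $\lim_{t\to0}\tfrac{1}{t}\arctan(t s)=s$, so the expression extends continuously across that slice; the argument must therefore distinguish evaluation at $t=0$ from the limit as $t\to0$ and explain why the continuous extension is inadmissible here. I would also have to reconcile the conclusion with the standard observation that taking the divergence of the momentum equation yields the Poisson problem
\[
\Delta p=-\rho\,\nabla\cdot\big[(\vec{u}\cdot\nabla)\vec{u}\big]+\rho\,\nabla\cdot\vec{f},
\]
which ordinarily does admit a solution; the burden is to demonstrate that the compatibility condition $\nabla\times\vec{G}=\vec{0}$ fails for these specific fields and that the failure is irreparable precisely at $t=0$. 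Settling this tension, between the apparent removability of the $\tfrac{0}{0}$ term and the asserted nonexistence, is where the real work of the theorem lies.
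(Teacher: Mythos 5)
Your route coincides with the paper's own proof up through the three componentwise integrations and the appearance of $\tfrac{1}{t}\arctan\left(t(x_1+x_2+x_3)\right)$, but the two objections you raise at the end are not loose ends to be tidied up later: they are precisely the points at which the paper's argument fails, and the paper never addresses them. The paper infers nonexistence from exactly the two facts you distrust, namely that the three antiderivatives look ``mutually different'' and that $\arctan(ts)/t$ has the form $\tfrac{0}{0}$ at $t=0$. Neither inference is valid. Each componentwise antiderivative is determined only up to an arbitrary function of the two remaining variables (the paper itself carries the terms $C(y,z)$, $C(x,z)$, $C(x,y)$), so three different-looking expressions are not a contradiction; consistency can only be decided by the curl test you propose, which the paper never performs. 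And the $\tfrac{0}{0}$ is removable, exactly as you say: $\arctan(ts)/t\to x_1+x_2+x_3$ as $t\to 0$, and on the slice $t=0$ itself the forcing is the constant field $(0,0,1)$, whose $x_3$-antiderivative is simply $x_3$. So the gap you anticipate is genuine, and it cannot be closed along the paper's lines; the paper closes it by assertion.

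The curl criterion you propose is the correct organizing principle, and completing it settles the matter, though not in the way the paper frames it. One correction to your plan: at $t=0$ the forcing is constant, hence curl-free, so the obstruction cannot come from $\vec{f}$; it must come from the velocity terms, and it does. With $\vec{a}=(1,1,1)$ one has $\vec{u}=2(\vec{a}\times\vec{x})/(1+|\vec{x}|^2)^2$, a pure swirl about the axis $\vec{a}$; in cylindrical coordinates $(r,\theta,z)$ adapted to that axis, $(\vec{u}\cdot\nabla)\vec{u}=-(v^2/r)\,\hat{e}_r$ with $v^2/r=12r/(1+r^2+z^2)^4$, whose curl is the azimuthal field $96rz\,(1+r^2+z^2)^{-5}\,\hat{e}_\theta\ne\vec{0}$, while $\nabla\times(\nu\Delta\vec{u})$ has no azimuthal component (the vector Laplacian of an axisymmetric swirl is again an axisymmetric swirl), and $\nabla\times\vec{f}$ vanishes at $t=0$ (and is non-axisymmetric for $t>0$, so it cannot cancel an axisymmetric azimuthal curl then either). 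Hence $\nabla\times\vec{G}\neq\vec{0}$ for every $t\ge 0$: no pressure exists at any instant, the distinguished role of $t=0$ evaporates, and the arctan term is a red herring. This also dissolves your Poisson ``tension'': $\Delta p=\rho\,\nabla\cdot\vec{G}$ always has decaying solutions, but such a $p$ satisfies $\nabla p=\rho\,\vec{G}$ only when $\nabla\times\vec{G}=\vec{0}$; in the genuine initial-value problem the non-gradient part of the right-hand side is absorbed by $\partial_t\vec{u}|_{t=0}$, which the paper has forced to vanish for all time by taking $\vec{u}$ independent of $t$. Your plan, completed by this computation, proves the frozen-velocity nonexistence statement correctly and in greater strength; the paper's stated proof does not prove it at all.
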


\begin{proof}
In order to simplify the manipulation of equations, let us define
\begin{align}
    d_{i} =x_{h(i-1)} -x_{h(i+1)}
\label{eq1.1}\end{align}
for $i=\{1,2,3\}$ and
\begin{align}
    S=1+\sum _{j=1}^{3}x_{j}^{2}  =1+\left|\vec{x}\right|^{2}
\label{eq1.2}\end{align}
for any $\vec{x}\in R^{3}$. Once (\ref{eq1.1}) and (\ref{eq1.2}) are applied to the statement for fluid velocity vector field $u_{i}(\vec{x})$ as defined by the theorem statement:
\begin{align}
    u_{i} (\vec{x})=2\frac{x_{h(i-1)} -x_{h(i+1)} }{\left(1+\sum _{j=1}^{3}x_{j}^{2}  \right)^{2} }
\label{eq1.3}\end{align}
for any $\vec{x}\in R^{3};i=\{1,2,3\}$, vector field components can be represented as:
\begin{align}
   u_{i} (\vec{x})=2\frac{d_{i} }{S^{2} }
\label{eq1.4}\end{align}
for any $\vec{x}\in R^{3};i=\{1,2,3\}$.
\newline\newline
\textbf{Validation of $\nabla \cdot \vec{u}(\vec{x})=0$ }
\newline
Let us verify the velocity vector field $\vec{u}(\vec{x})$ as divergence-free. The divergence of the velocity vector field can be expressed as
\begin{align}
  \nabla \cdot \vec{u}=\sum _{i=1}^{3}\frac{\partial u_{i} }{\partial x_{i} }
\label{eq1.5}\end{align}
for any $\vec{x}\in R^{3};i=\{1,2,3\}$. Applying (\ref{eq1.4}) in (\ref{eq1.5}):
\begin{align}
  \nabla \cdot \vec{u}=\sum _{i=1}^{3}\frac{\partial }{\partial x_{i} } \left(2\frac{d_{i} }{S^{2} } \right)
\label{eq1.6}\end{align}
\begin{align}
  \nabla \cdot \vec{u}=\sum _{i=1}^{3}\frac{2}{S^{4} } \left(\frac{\partial d_{i} }{\partial x_{i} } S^{2} -d_{i} 2S\frac{\partial S}{\partial x_{i} } \right)
\label{eq1.7}\end{align}
for any $\vec{x}\in R^{3};i=\{1,2,3\}$. Let us analyze $\frac{\partial d_{i} }{\partial x_{i} }$ within the brackets of the statement (\ref{eq1.7}):
\begin{align}
  \frac{\partial d_{i} }{\partial x_{i} } =\frac{\partial }{\partial x_{i} } (x_{h(i-1)} -x_{h(i+1)} )
\label{eq1.8}\end{align}
for any $\vec{x}\in R^{3};i=\{1,2,3\}$.
As the indices $h(i-1)\ne i$ and $h(i+1)\ne i$ in partial differentiation in statement (\ref{eq1.8}) are always different compared to the index $i$, then both partial differentiations must be zero:
\begin{align}
  \frac{\partial x_{h(i-1)} }{\partial x_{i} } =\frac{\partial x_{h(i+1)} }{\partial x_{i} } =0
\label{eq1.9}\end{align}
for $i=\{1,2,3\}$.
Applying (\ref{eq1.9}) in (\ref{eq1.8}):
\begin{align}
  \frac{\partial d_{i} }{\partial x_{i} } =\frac{\partial }{\partial x_{i} } (x_{h(i-1)} -x_{h(i+1)} )=0
\label{eq1.10}\end{align}
for $i=\{1,2,3\}$.
Applying (\ref{eq1.10}) in (\ref{eq1.7}):
\begin{align}
  \nabla \cdot \vec{u}=-\sum _{i=1}^{3}\frac{4d_{i} }{S^{3} } \frac{\partial S}{\partial x_{i} }
\label{eq1.12}\end{align}
for any $\vec{x}\in R^{3};i=\{1,2,3\}$.
As per the definition of $S$ by statement (\ref{eq1.2}), let us derive $\frac{\partial S}{\partial x_{i}}$:
\begin{align}
    \frac{\partial S}{\partial x_{i} } =\frac{\partial }{\partial x_{i} } \left(1+\sum _{j=1}^{3}x_{j}^{2}  \right)=\frac{\partial }{\partial x_{i} } \left(\sum _{j=1}^{3}x_{j}^{2}  \right)=\sum _{j=1}^{3}\frac{\partial x_{j}^{2} }{\partial x_{i} }  =2x_{i}
\label{eq1.13}\end{align}
for $i=\{1,2,3\}$.
Applying (\ref{eq1.13}) in (\ref{eq1.12}):
\begin{align}
    \nabla \cdot \vec{u}=-\frac{8}{S^{3} } \sum _{i=1}^{3}d_{i} x_{i}
\label{eq1.15}\end{align}
for any $\vec{x}\in R^{3};i=\{1,2,3\}$.
Applying (\ref{eq1.1}) in (\ref{eq1.15}) and following expanding:
\[\nabla \cdot \vec{u}=-\frac{8}{S^{3} } (2(x_{3} -x_{2} )x_{1} +2(x_{1} -x_{3} )x_{2} +2(x_{2} -x_{1} )x_{3} )\]
\[\nabla \cdot \vec{u}=-\frac{8}{S^{3} } (2x_{3} x_{1} -2x_{2} x_{1} +2x_{1} x_{2} -2x_{3} x_{2} +2x_{2} x_{3} -2x_{1} x_{3} )=0\]
\begin{align}
   \nabla \cdot \vec{u}=0
\label{eq1.19}\end{align}
for any $\vec{x}\in R^{3}$,
which proves that $\vec{u}(\vec{x})$ is divergence-free for any $\vec{x}\in R^{3}$.
\newline\newline
\textbf{Validation of $\vec{u}(\vec{x})$ continuity, convergence to zero at infinity, zero at coordinate origin}
\newline
Based on the definition of the fluid velocity vector field $\vec{u}(\vec{x})$, as per the theorem statement, we can conclude that vector field $\vec{u}(\vec{x})$ does not have singularity at any position in space $\vec{x}\in R^{3}$. The denominator for each of the velocity vector field components $\left\{u_{i} \right\}_{1\le i\le 3}$
\begin{align}
    \left(1+\sum _{j=1}^{3}x_{j}^{2}  \right)^{2} \ge 1
\label{eq1.19.1}\end{align}
for $i=\{1,2,3\}$ is positive definite for any position $\vec{x}\in R^{3}$.
The vector field $\vec{u}(\vec{x})$ at infinity converges to zero vector as:
\[\deg_{x} ( \vec{u}(\vec{x}))=\deg_{x}(\frac{ x_{h(i-1)} -x_{h(i+1)} }{\left(1+\sum _{j=1}^{3}x_{j}^{2}  \right)^{2}}) =1-4=-3\]
\begin{align}
    \lim _{\left|\vec{x}\right|\to \infty } \vec{u}(\vec{x})=\vec{0}
\label{eq1.29}\end{align}
Also, at the coordinate origin the velocity vector field is zero vector:
\begin{align}
    \vec{u}(\vec{0})=\vec{0}
\label{eq1.30}\end{align}
Based on statements (\ref{eq1.29}) and (\ref{eq1.30}),  we can could be conclude that fluid velocity vector field is zero vector at the coordinate origin, and converges to zero vector when the position converges to infinity.
\newline\newline
\textbf{Validation of bounded energy for $\vec{u}(\vec{x})$ }
\newline
Let us evaluate whether the fluid velocity vector field $\vec{u}(\vec{x})$ has bounded energy by integrating the square of its scalar value $\left|\vec{u}(\vec{x})\right|^{2} $ across all of $R^{3}$ space. Square of scalar value of $\vec{u}$ can be represented as:
\begin{align}
    \left|\vec{u}(\vec{x})\right|^{2} =\sum _{i=1}^{3}u_{i}^{2}
\label{eq1.40}\end{align}
for any $\vec{x}\in R^{3};i=\{1,2,3\}$.
Once the vector field components $u_{i}$ are applied to statement (\ref{eq1.40}), as per the definition of this theorem:
\[\left|\vec{u}(\vec{x})\right|^{2} =\left(2\frac{(x_{h(1-1)} -x_{h(1+1)} )}{(1+\sum _{j=1}^{3}x_{j}^{2}  )^{2} } \right)^{2} +\left(2\frac{(x_{h(2-1)} -x_{h(2+1)} )}{(1+\sum _{j=1}^{3}x_{j}^{2}  )^{2} } \right)^{2} +\left(2\frac{(x_{h(3-1)} -x_{h(3+1)} )}{(1+\sum _{j=1}^{3}x_{j}^{2}  )^{2} } \right)^{2} \]
\[\left|\vec{u}(\vec{x})\right|^{2} =4\frac{(x_{h(0)} -x_{h(2)} )^{2} }{(1+\sum _{j=1}^{3}x_{j}^{2}  )^{4} } +4\frac{(x_{h(1)} -x_{h(3)} )^{2} }{(1+\sum _{j=1}^{3}x_{j}^{2}  )^{4} } +4\frac{(x_{h(2)} -x_{h(4)} )^{2} }{(1+\sum _{j=1}^{3}x_{j}^{2}  )^{4} } \]
\[\left|\vec{u}(\vec{x})\right|^{2} =4\frac{(x_{3} -x_{2} )^{2} +(x_{1} -x_{3} )^{2} +(x_{2} -x_{1} )^{2} }{(1+\sum _{j=1}^{3}x_{j}^{2}  )^{4} } \]
\begin{align}
    \left|\vec{u}(\vec{x})\right|^{2} =8\frac{x_{1}^{2} -x_{1} x_{2} +x_{2}^{2} -x_{2} x_{3} +x_{3}^{2} -x_{3} x_{1} }{(1+\sum _{j=1}^{3}x_{j}^{2}  )^{4} }
\label{eq1.41.1}\end{align}
for any $\vec{x}\in R^{3}$.
Let us integrate the square of the scalar function $\left|\vec{u}(\vec{x})\right|^{2} $ across all of $R^{3}$ space:
\begin{align}
    \int _{R^{3} }\left|\vec{u}(\vec{x})\right|^{2} dx =\int _{R^{3} }8\frac{x_{1}^{2} -x_{1} x_{2} +x_{2}^{2} -x_{2} x_{3} +x_{3}^{2} -x_{3} x_{1} }{(1+\sum _{j=1}^{3}x_{j}^{2}  )^{4} } dx
\label{eq1.42}\end{align}
 Once integration in statement (\ref{eq1.42}) is performed, result is constant:
\begin{align}
    \int _{R^{3} }\left|\vec{u}\right|^{2} dx =\pi ^{2}
\label{eq1.43}\end{align}
As per (\ref{eq1.43}), the result of integration is constant $\pi ^{2}=Const$. Based on this, we can conclude that the fluid total kinetic energy, is constant across all of $R^{3}$ space.
\newline\newline
\textbf{Validation that vector field $\vec{u}(\vec{x})$  is continuously differentiable}
\newline
In this section, we confirm that the fluid velocity vector field $\vec{u}(\vec{x})$ is continuously differentiable. Let us apply the partial derivative $\frac{\partial}{\partial x_{k}}$ to the fluid velocity vector field $\vec{u}(\vec{x})$ components $u_{i}$:
\begin{align}
    \frac{\partial u_{i} (\vec{x})}{\partial x_{k} } =\lim _{dx_{k} \to 0} \frac{u_{i} (\vec{x}+dx_{k} )-u_{i} (\vec{x})}{dx_{k} }
\label{eq1.43.1}\end{align}
for any $\vec{x}\in R^{3}$, $k=\{1,2,3\}$, $i=\{1,2,3\}$.
Let us introduce the function $g(j,k) \in \{0,1\}$, such that it has value 1 in case that the index $j\in\{1,2,3\}$ is passed to it as a parameter, is equal to the index $k\in\{1,2,3\}$ of $x_{k}$; otherwise, it is 0:
\begin{align}
    g(j,k)=\left\{\begin{array}{cc} {0} & {;j\ne k} \\ {1} & {;j=k} \end{array}\right.
\label{eq1.43.3}\end{align}
for $k\in\{1,2,3\}$; $j\in\{1,2,3\}$.
Once the vector field components $u_{i}$, as defined by this theorem statement, are applied to the statement (\ref{eq1.43.1}) together with (\ref{eq1.43.3}), statement (\ref{eq1.43.1}) can be equivalently represented as:
\begin{align}
    \frac{\partial u_{i} (\vec{x})}{\partial x_{k} } =\lim _{dx_{k} \to 0} \frac{2\frac{(x_{h(i-1)} +g(h(i-1),k)dx_{k} )-(x_{h(i+1)} +g(h(i+1),k)dx_{k} )}{\left(1+\sum _{j=1}^{3}(x_{j} +g(j,k)dx_{k} )^{2}  \right)^{2} } -2\frac{x_{h(i-1)} -x_{h(i+1)} }{\left(1+\sum _{j=1}^{3}x_{j}^{2}  \right)^{2} } }{dx_{k} }
\label{eq1.49.5}\end{align}
for any $\vec{x}\in R^{3}$, $k=\{1,2,3\}$, $i=\{1,2,3\}$.
For terms in the brackets of statement (\ref{eq1.49.5}):
\[x_{h(i-1)} +g(h(i-1),k)dx_{k} \]
for $i=\{1,2,3\}$, $k=\{1,2,3\}$. Function $g(h(i-1),k)$ would have value 1 in case when $h(i-1)=k$, so the resulting statement in that scenario would become:
\[x_{h(i-1)} +dx_{k} \]
for $i=\{1,2,3\}$, $k=\{1,2,3\}$. Otherwise, when $h(i-1)\ne k$ function $g(j,k)$ would return zero, and the result would be
\[x_{h(i-1)} \]
for $i=\{1,2,3\}$
Once we apply the same principle for the other two pairs of terms in the second brackets of statement (\ref{eq1.49.5}):
\[(x_{h(i+1)} +g(h(i+1))dx_{k} )\]
for $i=\{1,2,3\}$, $k=\{1,2,3\}$. as well as for denominator:
\begin{align}
    \left(1+\sum _{j=1}^{3}(x_{j} +g(j,k)dx_{k} )^{2}  \right)^{2}
\label{eq1.50}\end{align}
for $k=\{1,2,3\}$,
we can conclude that the resulting statement (\ref{eq1.49.5}) will include $dx_{k}$ for a proper and matching $x_{j}$, for which indices are the same $j=k$, in order to form a correct statement for partial differentiation of $u_{i} (\vec{x})$ by $x_{k}$.
Therefore, in statement (\ref{eq1.50}) $dx_{k}$ will be included in case when $j=k$ for which $g(j,k)=1$. With that, we expand and rearrange statement (\ref{eq1.50}), which could be equivalently represented as:
\begin{align}
    \left(2x_{k} dx_{k} +dx_{k}^{2} +1+\sum _{j=1}^{3}x_{j}^{2}  \right)^{2}
\label{eq1.50.1}\end{align}
for $k=\{1,2,3\}$.
Applying (\ref{eq1.2})in (\ref{eq1.50}):
\begin{align}
    \left(S+(2x_{k} dx_{k} +dx_{k}^{2} )\right)^{2}
\label{eq1.52}\end{align}
for $k=\{1,2,3\}$.
Let us expand (\ref{eq1.52}):
\begin{align}
    S^{2} +dx_{k} (4Sx_{k} +dx_{k} (2S+4x_{k}^{2} +4x_{k} dx_{k} +dx_{k}^{2} ))
\label{eq1.53}\end{align}
Applying (\ref{eq1.53}) in (\ref{eq1.49.5})
\begin{align}
     \frac{\partial u_{i} (\vec{x})}{\partial x_{k} } =\lim _{dx_{k} \to 0} \frac{1}{dx_{k} } 2\left(\frac{(x_{h(i-1)} +g(h(i-1),k)dx_{k} )-(x_{h(i+1)} +g(h(i+1),k)dx_{k} )}{S^{2} +dx_{k} (4Sx_{k} +dx_{k} (2S+4x_{k}^{2} +4x_{k} dx_{k} +dx_{k}^{2} ))} -\frac{x_{h(i-1)} -x_{h(i+1)} }{S^{2} } \right)
\label{eq1.53.1}\end{align}
for any $\vec{x}\in R^{3}$, $k=\{1,2,3\}$, $i=\{1,2,3\}$.
After applying common denominator and simplifying statement (\ref{eq1.53.1}):
\begin{align}
    \frac{\partial u_{i} (\vec{x})}{\partial x_{k} } =\lim _{dx_{k} \to 0} \frac{2((g(h(i-1),k)-g(h(i+1),k)))S^{2}}{(S^{2} +dx_{k} (4Sx_{k} +dx_{k} (2S+4x_{k}^{2} +4x_{k} dx_{k} +dx_{k}^{2} )))S^{2} }-
\label{eq1.53.2}\end{align}
\[ -\frac{(4Sx_{k} +dx_{k} (2S+4x_{k}^{2} +4x_{k} dx_{k} +dx_{k}^{2} ))(x_{h(i-1)} -x_{h(i+1)} )}{(S^{2} +dx_{k} (4Sx_{k} +dx_{k} (2S+4x_{k}^{2} +4x_{k} dx_{k} +dx_{k}^{2} )))S^{2} }\]

for any $\vec{x}\in R^{3}$, $k=\{1,2,3\}$, $i=\{1,2,3\}$.
Once limit for $dx_{k} \to 0$ is applied:
\begin{align}
    \frac{\partial u_{i} (\vec{x})}{\partial x_{k} } =\frac{2((g(h(i-1),k)-g(h(i+1),k)))S^{2} -8Sx_{k} (x_{h(i-1)} -x_{h(i+1)} )}{S^{4} }
\label{eq1.53.3}\end{align}
for any $\vec{x}\in R^{3}$, $k=\{1,2,3\}$, $i=\{1,2,3\}$.
Depending on indices $i$ and $k$, there are three scenarios:
\newline
1)For $k=i$ then $g(h(i-1),k)=0$ and $g(h(i+1),k)=0$. Applying this in (\ref{eq1.53.3}):
\begin{align}
    \frac{\partial u_{i} (\vec{x})}{\partial x_{k} } =-\frac{8Sx_{k} (x_{h(i-1)} -x_{h(i+1)} )}{S^{4} }
\label{eq1.54}\end{align}
for any $\vec{x}\in R^{3}$, $k=\{1,2,3\}$, $i=\{1,2,3\}$.
\newline
2)For $k=i+1$ then $g(h(i-1),k)=0$ and $g(h(i+1),k)=1$. Applying this in (\ref{eq1.53.3}):
\begin{align}
    \frac{\partial u_{i} (\vec{x})}{\partial x_{k} } =-\frac{2S+8x_{k} (x_{h(i-1)} -x_{h(i+1)} )}{S^{3} }
\label{eq1.55}\end{align}
3)For $k=i-1$ then $g(h(i-1),k)=1$ and $g(h(i+1),k)=0$. Applying this in (\ref{eq1.53.3}):
\begin{align}
    \frac{\partial u_{i} (\vec{x})}{\partial x_{k} } =\frac{2S-8x_{k} (x_{h(i-1)} -x_{h(i+1)} )}{S^{3} }
\label{eq1.56}\end{align}
for any $\vec{x}\in R^{3}$, $k=\{1,2,3\}$, $i=\{1,2,3\}$.
In all three cases above (\ref{eq1.54}), (\ref{eq1.55}), (\ref{eq1.56}), the denominators are in general form $S^{n} =\left(1+\sum _{j=1}^{3}x_{j}^{2}  \right)^{n} $ for $n \in \{3,4\}$, which is positive definite and must be greater or equal to 1
\begin{align}
    \left(1+\sum _{j=1}^{3}x_{j}^{2}  \right)^{n} \ge 1
\label{eq1.56.1}\end{align}
for $n \in \{3,4\}$.
Statement (\ref{eq1.56.1}) has minimal value at the coordinate origin
\begin{align}
    \left. \left(1+\sum _{j=1}^{3}x_{j}^{2}  \right)^{n} \right|_{\vec{x}=\vec{0}} =1
\label{eq1.56.2}\end{align}
Based on this, we can conclude that the velocity vector field $\vec{u}(\vec{x})$ is continuously differentiable for first partial derivatives.
\noindent As the denominator of the first partial derivative $\frac{\partial u_{i} (\vec{x})}{\partial x_{k}}$ is in form $S^{n} =\left(1+\sum _{j=1}^{3}x_{j}^{2}  \right)^{n} $ for $n \in \{3,2\}$ , which has same form of denominator as the velocity vector field itself $S^{2} =\left(1+\sum _{j=1}^{3}x_{j}^{2}  \right)^{2}$. Repeating the same process of partial differentiation as above, we can conclude that for multiple partial derivatives $\frac{\partial ^{\alpha } u_{i} (\vec{x})}{\partial x_{k}^{\alpha } } $, the resulting derivative's denominators must have the same general form $\left(1+\sum _{j=1}^{3}x_{j}^{2}  \right)^{n} \ge 1$ ; $\{n>2\} \in N$, which must be positive definite and cannot create singularities.
\noindent In addition, partial differentiation can be repeated an unlimited number of times.
We can therefore conclude that the velocity vector field $\vec{u}(\vec{x})$ is continuously differentiable $u(\vec{x})\in C^{\infty } $ for any position $\vec{x}\in R^{3} $.
\newline\newline
\textbf{Derivation of partial derivative $\frac{\partial^{\alpha} u_{i}}{\partial x_{k}^{\alpha}}$ of any order $\alpha$}
\newline
Let us explore partial derivatives of the velocity vector field $\vec{u}(\vec{x})=(u_{i} (\vec{x}))_{1\le i\le 3} \in R^{3} $ for any order $\alpha$ of partial differentiation by $x_{k} $ ; $k\in \{ 1,2,3\} $
\newline
First Partial Derivative:
\begin{align}
    \frac{\partial u_{i} }{\partial x_{k} } =\frac{\partial }{\partial x_{k} } \left(2\frac{d_{i} }{S^{2} } \right)
\label{eq1.58.1}\end{align}
\begin{align}
    \frac{\partial u_{i} }{\partial x_{k} } =\frac{2}{S^{4} } \left(\frac{\partial d_{i} }{\partial x_{k} } S^{2} -d_{i} 2S\frac{\partial S}{\partial x_{k} } \right)
\label{eq1.59}\end{align}
for $\vec{x}\in R^{3}$, $k=\{1,2,3\}$, $i=\{1,2,3\}$.
Per (\ref{eq1.1}):
\begin{align}
    \frac{\partial d_{i} }{\partial x_{k} } =\frac{\partial }{\partial x_{k} } x_{h(i-1)} -\frac{\partial }{\partial x_{k} } x_{h(i+1)} =\left\{\begin{array}{ccc} {1} & {\qquad ;k=h(i-1)} & {} \\ {-1} & {\qquad ;k=h(i+1)} & {} \\ {0} & {;k=i} & {} \end{array}\right.
\label{eq1.60}\end{align}
for $\vec{x}\in R^{3}$, $k=\{1,2,3\}$, $i=\{1,2,3\}$.
In the most general case, for the non-zero result when $k\ne i$, as per (\ref{eq1.60})
\begin{align}
    \frac{\partial d_{i} }{\partial x_{k} } =\pm 1
\label{eq1.60.1}\end{align}
once (\ref{eq1.60.1}) and (\ref{eq1.13}) are applied in (\ref{eq1.59})
\begin{align}
    \frac{\partial u_{i} }{\partial x_{k} } =\pm \frac{2}{S^{2} } -8\frac{d_{i} x_{k} }{S^{3} }
\label{eq1.60.2}\end{align}
for $\vec{x}\in R^{3}$, $k=\{1,2,3\}$, $i=\{1,2,3\}$.
Once (\ref{eq1.2}) is applied in (\ref{eq1.60.2})
\begin{align}
    \frac{\partial u_{i} }{\partial x_{k} } =\pm \frac{2}{\left(1+\left|\vec{x}\right|^{2} \right)^{2} } -8\frac{d_{i} x_{k} }{\left(1+\left|\vec{x}\right|^{2} \right)^{3} }
\label{eq1.61.1}\end{align}
let us determine the degree for $x$ of the first partial derivative:
\begin{align}
    \deg _{x} (\frac{\partial u_{i} }{\partial x_{k} } )=\deg (\pm \frac{2}{\left(1+\left|\vec{x}\right|^{2} \right)^{2} } -8\frac{d_{i} x_{k} }{\left(1+\left|\vec{x}\right|^{2} \right)^{3} } )
\label{eq1.62}\end{align}
for $\vec{x}\in R^{3}$, $k=\{1,2,3\}$, $i=\{1,2,3\}$.
\begin{align}
    \deg _{x} (\frac{\partial u_{i} }{\partial x_{k} } )=-4
\label{eq1.62.1}\end{align}
for $\vec{x}\in R^{3}$, $k=\{1,2,3\}$, $i=\{1,2,3\}$.
Note that both terms in the brackets of (\ref{eq1.62}) are of the same degree.
\newline
Second Partial Derivative:
Continuing partial differentiation from (\ref{eq1.60.2}):
\begin{align}
    \frac{\partial ^{2} u_{i} }{\partial x_{k}^{2} } =\frac{\partial }{\partial x_{k} } \left(\pm \frac{2}{S^{2} } -8\frac{d_{i} x_{k} }{S^{3} } \right)
\label{eq1.62.2}\end{align}
\begin{align}
    \frac{\partial ^{2} u_{i} }{\partial x_{k}^{2} } =\pm \frac{\partial }{\partial x_{k} } \left(\frac{2}{S^{2} } \right)-\frac{\partial }{\partial x_{k} } \left(8\frac{d_{i} x_{k} }{S^{3} } \right)
\label{eq1.62.3}\end{align}
\begin{align}
    \frac{\partial ^{2} u_{i} }{\partial x_{k}^{2} } =\mp \frac{4}{S^{3} } \frac{\partial S}{\partial x_{k} } -8\frac{\partial d_{i} }{\partial x_{k} } \frac{x_{k} }{S^{3} } -8d_{i} \frac{\partial }{\partial x_{k} } \left(\frac{x_{k} }{S^{3} } \right)
\label{eq1.62.4}\end{align}
for $\vec{x}\in R^{3}$, $k=\{1,2,3\}$, $i=\{1,2,3\}$.
In the most general case, for non-zero result when $k\ne i$, as per (\ref{eq1.60})
\begin{align}
    \frac{\partial d_{i} }{\partial x_{k} } =\pm 1
\label{eq1.62.5}\end{align}
applying (\ref{eq1.62.5}) in (\ref{eq1.62.1})
\begin{align}
    \frac{\partial ^{2} u_{i} }{\partial x_{k}^{2} } =\mp 8\frac{x_{k} }{S^{3} } \mp 8\frac{x_{k} }{S^{3} } -8d_{i} \frac{1}{S^{6} } \left(\frac{\partial x_{k} }{\partial x_{k} } S^{3} -x_{k} 3S^{2} \frac{\partial S}{\partial x_{k} } \right)
\label{eq1.63}\end{align}
for $\vec{x}\in R^{3}$, $k=\{1,2,3\}$, $i=\{1,2,3\}$.
Once (\ref{eq1.13}) and  (\ref{eq1.2}) are applied in (\ref{eq1.63})
\begin{align}
    \frac{\partial ^{2} u_{i} }{\partial x_{k}^{2} } =\frac{\mp 16x_{k} -8d_{i} }{\left(1+\left|\vec{x}\right|^{2} \right)^{3} } +48\frac{d_{i} x_{k}^{2} }{\left(1+\left|\vec{x}\right|^{2} \right)^{4} }
\label{eq1.64}\end{align}
for $\vec{x}\in R^{3}$, $k=\{1,2,3\}$, $i=\{1,2,3\}$.
The degree of $x$ for the second partial derivative, as per statement (\ref{eq1.64}) is:
\begin{align}
    \deg _{x} (\frac{\partial ^{2} u_{i} }{\partial x_{k}^{2} } )=\deg (\frac{\mp 16x_{k} -8d_{i} }{\left(1+\left|\vec{x}\right|^{2} \right)^{3} } +48\frac{d_{i} x_{k}^{2} }{\left(1+\left|\vec{x}\right|^{2} \right)^{4} } )
\label{eq1.65}\end{align}
\begin{align}
    \deg _{x} (\frac{\partial ^{2} u_{i} }{\partial x_{k}^{2} } )=-5
\label{eq1.65.1}\end{align}
Note that both terms in the brackets of statement (\ref{eq1.64}) are of the same degree for $x$.
In case that the first partial derivative is by $x_{k}$ and the second partial derivative is by $x_{j} $ for which $j\ne k$, starting from statement (\ref{eq1.60.2}):
\begin{align}
    \frac{\partial ^{2} u_{i} }{\partial x_{j} \partial x_{k} } =\frac{\partial }{\partial x_{j} } \left(\pm \frac{2}{S^{2} } -8\frac{d_{i} x_{k} }{S^{3} } \right)
\label{eq1.65.2}\end{align}
\begin{align}
    \frac{\partial ^{2} u_{i} }{\partial x_{j} \partial x_{k} } =\pm \frac{\partial }{\partial x_{j} } \left(\frac{2}{S^{2} } \right)-8\frac{\partial }{\partial x_{j} } \left(\frac{d_{i} x_{k} }{S^{3} } \right)
\label{eq1.65.3}\end{align}
\begin{align}
    \frac{\partial ^{2} u_{i} }{\partial x_{j} \partial x_{k} } =\mp \frac{4}{S^{3} } \frac{\partial S}{\partial x_{j} } -\frac{8}{S^{6} } \left(\frac{\partial (d_{i} x_{k} )}{\partial x_{j} } S^{3} -d_{i} x_{k} 3S^{2} \frac{\partial S}{\partial x_{j} } \right)
\label{eq1.66}\end{align}
for $\vec{x}\in R^{3}$, $k=\{1,2,3\}$, $i=\{1,2,3\}$, $j\ne k$.
As per (\ref{eq1.1}):
\begin{align}
    \frac{\partial (d_{i} x_{k} )}{\partial x_{j} } =\frac{\partial }{\partial x_{j} } \left((x_{h(i-1)} -x_{h(i+1)} )x_{k} \right)=\frac{\partial }{\partial x_{j} } \left(x_{h(i-1)} x_{k} -x_{h(i+1)} x_{k} \right)
\label{eq1.66.1}\end{align}
There are three scenarios to consider:
Scenario 1: $j=h(i-1)$:
\newline
\begin{align}
    \frac{\partial (d_{i} x_{k} )}{\partial x_{j} } =\frac{\partial }{\partial x_{j} } \left(x_{j} x_{k} -x_{h(i+1)} x_{k} \right)
\label{eq1.66.2}\end{align}
\begin{align}
    \frac{\partial (d_{i} x_{k} )}{\partial x_{j} } =x_{k}
\label{eq1.66.3}\end{align}
Scenario 2: $j=h(i+1)$:
\newline
\begin{align}
    \frac{\partial (d_{i} x_{k} )}{\partial x_{j} } =\frac{\partial }{\partial x_{j} } \left(x_{h(i-1)} x_{k} -x_{j} x_{k} \right)
\label{eq1.66.4}\end{align}
\begin{align}
    \frac{\partial (d_{i} x_{k} )}{\partial x_{j} } =-x_{k}
\label{eq1.66.5}\end{align}
Scenario 3: $j\ne h(i\pm 1)\wedge j\ne k$:
\newline
\begin{align}
    \frac{\partial (d_{i} x_{k} )}{\partial x_{j} } =\frac{\partial }{\partial x_{j} } \left(x_{h(i-1)} x_{k} -x_{h(i+1)} x_{k} \right)
\label{eq1.66.6}\end{align}
\begin{align}
    \frac{\partial (d_{i} x_{k} )}{\partial x_{j} } =0
\label{eq1.66.7}\end{align}
Based on the three scenarios (\ref{eq1.66.3}), (\ref{eq1.66.5}) and (\ref{eq1.66.7}), statement (\ref{eq1.66}) has the largest degree for $x$ when
\begin{align}
    j=h(i\pm 1)\vee j=k
\label{eq1.66.8}\end{align}
as otherwise, the result of differentiation by $x_{j} $ is zero.
As the scenario when $j=k$ is already covered with $\frac{\partial ^{2} u_{i} }{\partial x_{k}^{2} }$ with statement (\ref{eq1.65}), let us select scenario $j=h(i\pm 1)$ when as per (\ref{eq1.66.3}) and (\ref{eq1.66.5})
\begin{align}
    \frac{\partial (d_{i} x_{k} )}{\partial x_{j} } =\pm x_{k}
\label{eq1.67}\end{align}
once (\ref{eq1.67}) and (\ref{eq1.13}) are  applied in statement (\ref{eq1.66})
\begin{align}
    \frac{\partial ^{2} u_{i} }{\partial x_{j} \partial x_{k} } =\mp \frac{4}{S^{3} } 2x_{j} -\frac{8}{S^{6} } \left(\pm x_{k} S^{3} -3d_{i} x_{k} S^{2} 2x_{j} \right)
\label{eq1.67}\end{align}
\begin{align}
    \frac{\partial ^{2} u_{i} }{\partial x_{j} \partial x_{k} } =\mp \frac{16}{S^{3} } \left(x_{j} +x_{k} \right)+\frac{48}{S^{4} } d_{i} x_{k} x_{j}
\label{eq1.68}\end{align}
for $\vec{x}\in R^{3}$, $k=\{1,2,3\}$, $i=\{1,2,3\}$.
The degree for $x$ as per (\ref{eq1.68}) is
\begin{align}
    \deg _{x} (\frac{\partial ^{2} u_{i} }{\partial x_{j} \partial x_{k} } )=\deg _{x} (\mp \frac{16}{S^{3} } \left(x_{j} +x_{k} \right)+\frac{48}{S^{4} } d_{i} x_{k} x_{j} )
\label{eq1.68.1}\end{align}
As per (\ref{eq1.2}) and (\ref{eq1.68.1}):
\begin{align}
    \deg _{x} (\frac{\partial ^{2} u_{i} }{\partial x_{j} \partial x_{k} } )=-5
\label{eq1.68.2}\end{align}
for $\vec{x}\in R^{3}$, $k=\{1,2,3\}$, $i=\{1,2,3\}$.
Let us represent partial derivatives by $x$ in a more generic fashion:
\begin{align}
    \frac{\partial ^{\alpha } u_{i} }{\partial x^{\alpha } }
\label{eq1.68.3}\end{align}
for $\vec{x}\in R^{3}$, $k=\{1,2,3\}$, $i=\{1,2,3\}$.
Where $\alpha$ represents the order of partial derivative by $\alpha$-tuple of $x_{i} $  ;$i=\{ 1,2,3\}$. Also, instead of specific numerical values that appear with various terms, let us define $C \in R$ such that it represents the generalized form of any constant value, representing numerical values which do not contribute to this analysis.
Based on this, let us express the first partial derivative of velocity vector field components $u_{i}$;$i=\{ 1,2,3\}$, as per statement (\ref{eq1.61.1}) in a more general form:
\begin{align}
    \frac{\partial u_{i} }{\partial x} =\frac{C}{\left(1+\left|\vec{x}\right|^{2} \right)^{2} } +\frac{C }{\left(1+\left|\vec{x}\right|^{2} \right)^{3} } x^{2}
\label{eq1.69}\end{align}
The second partial derivative, as per (\ref{eq1.68.1}) can be represented as
\begin{align}
    \frac{\partial ^{2} u_{i} }{\partial x^{2} } =\frac{C }{\left(1+\left|\vec{x}\right|^{2} \right)^{3} } x+\frac{C }{\left(1+\left|\vec{x}\right|^{2} \right)^{4} } x^{3}
\label{eq1.70}\end{align}
The third partial derivative can be represented as:
\begin{align}
    \frac{\partial ^{3} u_{i} }{\partial x^{3} } =C\frac{\partial }{\partial x} \left(\frac{x}{S^{3} } \right)+C\frac{\partial }{\partial x} \left(\frac{x^{3} }{S^{4} } \right)
\label{eq1.71}\end{align}
\begin{align}
    \frac{\partial ^{3} u_{i} }{\partial x^{3} } =\frac{C}{S^{6} } \left(\frac{\partial x}{\partial x} S^{3} -x3S^{2} \frac{\partial S}{\partial x} \right)+\frac{C}{S^{8} } \left(3x^{2} S^{4} -x^{3} 4S^{3} \frac{\partial S}{\partial x} \right)
\label{eq1.72}\end{align}
once (\ref{eq1.13})is applied in (\ref{eq1.72})
\begin{align}
    \frac{\partial ^{3} u_{i} }{\partial x^{3} } =\frac{C}{S^{6} } \left(S^{3} -3xS^{2} 2x\right)+\frac{C}{S^{8} } \left(3x^{2} S^{4} -4x^{3} S^{3} 2x\right)
\label{eq1.74}\end{align}
\begin{align}
    \frac{\partial ^{3} u_{i} }{\partial x^{3} } =\frac{C}{S^{3} } +\frac{C}{S^{4} } x^{2} +\frac{C}{S^{5} } x^{4}
\label{eq1.77}\end{align}
for $\vec{x}\in R^{3}$, $k=\{1,2,3\}$, $i=\{1,2,3\}$.
The degree of $x$ as per (\ref{eq1.77}) is:
\begin{align}
    \deg _{x} (\frac{\partial ^{3} u_{i} }{\partial x^{3} } )=\deg _{x} (\frac{C}{S^{3} } +\frac{C}{S^{4} } x^{2} +\frac{C}{S^{5} } x^{4} )
\label{eq1.78}\end{align}
\begin{align}
    \deg _{x} (\frac{\partial ^{3} u_{i} }{\partial x^{3} } )=-6
\label{eq1.79}\end{align}
Notably, all terms in statement (\ref{eq1.78}) are of the same degree.
Let us perform one additional partial derivative. The fourth partial derivative can be represented as
\begin{align}
    \frac{\partial }{\partial x} \left(\frac{\partial ^{3} u_{i} }{\partial x^{3} } \right)=\frac{\partial }{\partial x} \left(\frac{C}{S^{3} } +\frac{C}{S^{4} } x^{2} +\frac{C}{S^{5} } x^{4} \right)
\label{eq1.80}\end{align}
\begin{align}
    \frac{\partial ^{4} u_{i} }{\partial x^{4} } =\frac{\partial }{\partial x} \left(\frac{C}{S^{3} } +\frac{C}{S^{4} } x^{2} +\frac{C}{S^{5} } x^{4} \right)
\label{eq1.81}\end{align}
The partial derivatives of each individual term of statement (\ref{eq1.81}) in the brackets are:
\begin{align}
    \frac{\partial }{\partial x} \left(\frac{C}{S^{3} } \right)=-3\frac{C}{S^{6} } S^{2} \frac{\partial S}{\partial x} =\frac{C}{S^{4} } \frac{\partial S}{\partial x} ==\frac{C}{S^{4} } 2x=\frac{C}{S^{4} } x
\label{eq1.82}\end{align}
\begin{align}
    \frac{\partial }{\partial x} \left(\frac{C}{S^{4} } x^{2} \right)=\frac{C}{S^{8} } 2xS^{4} -\frac{C}{S^{8} } x^{2} 4S^{3} \frac{\partial S}{\partial x} =\frac{C}{S^{4} } x-\frac{C}{S^{5} } 4x^{2} 2x=\frac{C}{S^{4} } x+\frac{C}{S^{5} } x^{3}
\label{eq1.83}\end{align}
\begin{align}
    \frac{\partial }{\partial x} \left(\frac{C}{S^{5} } x^{4} \right)=\frac{C}{S^{10} } 4x^{3} S^{5} -\frac{C}{S^{10} } x^{4} 5S^{4} \frac{\partial S}{\partial x} =\frac{C}{S^{5} } x^{3} -\frac{C}{S^{6} } x^{4} 2x=\frac{C}{S^{5} } x^{3} +\frac{C}{S^{6} } x^{5}
\label{eq1.84}\end{align}
Once all three partial derivatives as per statements (\ref{eq1.82}), (\ref{eq1.83}), (\ref{eq1.84}) are summed up together:
\begin{align}
    \frac{\partial ^{4} u_{i} }{\partial x^{4} } =\frac{C}{S^{4} } x+\frac{C}{S^{5} } x^{3} +\frac{C}{S^{6} } x^{5}
\label{eq1.85}\end{align}
for any $\vec{x}\in R^{3}$.
The degree of $x$ as per (\ref{eq1.85}) is:
\begin{align}
    \deg _{x} (\frac{\partial ^{4} u_{i} }{\partial x^{4} } )=\deg _{x} (\frac{C}{S^{4} } x+\frac{C}{S^{5} } x^{3} +\frac{C}{S^{6} } x^{5} )=-7
\label{eq1.86}\end{align}
Also, notably, all terms above are of the same degree.
\newline
Based on statements (\ref{eq1.62.1}), (\ref{eq1.65.1}), (\ref{eq1.79}), (\ref{eq1.86}), we can conclude that the $\alpha $-th order of the partial differential of the fluid velocity vector field component $u_{i}$ by $x$ can be expressed as:
\begin{align}
    \deg _{x} (\frac{\partial ^{\alpha } u_{i} }{\partial x^{\alpha } } )=-\alpha -3
\label{eq1.86.1}\end{align}
for any $\{\alpha > 0\}\in N$, $\vec{x}\in R^{3}$, $i=\{1,2,3\}$.
Based on statements (\ref{eq1.62.1}), (\ref{eq1.65.1}), (\ref{eq1.79}), (\ref{eq1.86}), the partial differentiation by $x$ of any order $\alpha$ can be represented as:
\begin{align}
    \frac{\partial ^{\alpha } u_{i} }{\partial x^{\alpha } } =\frac{C }{\left(1+\left|\vec{x}\right|^{2} \right)^{\left\lfloor \frac{\left|\alpha \right|}{2} \right\rfloor +2} } +\sum _{j=1}^{\left\lfloor \frac{\left|\alpha \right|-1}{2} \right\rfloor +2}C\frac{ x^{k_{j} } }{\left(1+\left|\vec{x}\right|^{2} \right)^{m_{j} } }
\label{eq1.94}\end{align}
for any $\{\alpha > 0\}\in N$, $x \in R^{3}; i=\{ 1,2,3\}$ and where $x^{k_{j} }$ represents $k_{j} $ - tuple of any of $x_{i} $ ; $j=\{ 1,2,3\}$
\newline
Also, notably, that each of the terms in statement (\ref{eq1.94}) are of the same degree:
\begin{align}
    \deg _{x} (C\frac{ x^{k_{j} } }{\left(1+\left|\vec{x}\right|^{2} \right)^{m_{j} } } )=-\alpha -3
\label{eq1.95}\end{align}
Then, for each of the terms of sum in statement (\ref{eq1.94}), limit for $|\vec{x}|\to \infty$:
\begin{align}
    \lim _{\left|\vec{x}\right|\to \infty } C \frac{ x^{k_{j} } }{\left(1+\left|\vec{x}\right|^{2} \right)^{m_{j} } }=0
\label{eq1.96}\end{align}
At the coordinate origin $\vec{x}=\vec{0}$, each of the terms of the sum in statement (\ref{eq1.94}) is zero:
\begin{align}
    \left[C \frac{x^{k_{j} } }{\left(1+\left|\vec{x}\right|^{2} \right)^{m_{j} } } \right]_{\vec{x}=\vec{0}} =0
\label{eq1.97}\end{align}
Let us name $C_{\alpha } (i,j)\in R$ such that
\begin{align}
    C_{\alpha } (i,j)=C \frac{ x^{k_{j} } }{\left(1+\left|\vec{x}\right|^{2} \right)^{m_{j} } }
\label{eq1.100}\end{align}
In order to verify that $C_{\alpha } (i,j)$ has to be finite, let us assume the opposite:
\begin{align}
    C_{\alpha } (i,j)=\infty
\label{eq1.100.1}\end{align}
As per statement (\ref{eq1.96}) we concluded that for $\left|\vec{x}\right| \to \infty$, the value of each term of statement (\ref{eq1.94}) converges to zero, and at the coordinate origin, as per (\ref{eq1.96}), each of the terms has value zero. Then, for any selected position $\vec{x}\in R^{3} $, each of the terms can be evaluated as infinite only in case that at least one of terms' denominator is equal to zero:
\begin{align}
    1+\left|\vec{x}\right|^{2} =0
\label{eq1.102}\end{align}
However, $1+\left|\vec{x}\right|^{2} $ is definite positive and cannot be zero. Its minimal value is 1. Based on this, we can conclude that statement (\ref{eq1.100.1}) is impossible, and therefore $C_{\alpha } (i,j)$ must be finite.
\newline
As $1+\left|\vec{x}\right|^{2} $ is definite positive than the first term of statement (\ref{eq1.94})
\begin{align}
    \frac{a_{0} }{\left(1+\left|\vec{x}\right|^{2} \right)^{\left\lfloor \frac{\left|\alpha \right|}{2} \right\rfloor +2} }
\label{eq1.104}\end{align}
also has to be finite for any $\{\alpha>0\} \in N$.
\newline
Based on these two conclusions, the sum of all terms of statement (\ref{eq1.94}) must be finite as well:
\begin{align}
    \frac{\partial ^{\alpha } u_{i} }{\partial x^{\alpha } } \ne \infty
\label{eq1.105}\end{align}
Based on (\ref{eq1.105}), there must be some finite $\{ C_{\alpha } (i)\ge 0\} \in R$ such that
\begin{align}
    \left|\frac{\partial ^{\alpha } u_{i} }{\partial x^{\alpha } } \right|=\left|\frac{a_{0} }{\left(1+\left|\vec{x}\right|^{2} \right)^{\left\lfloor \frac{\left|\alpha \right|}{2} \right\rfloor +2} } +\sum _{j=1}^{n(\alpha )}\frac{a_{j} x^{k_{j} } }{\left(1+\left|\vec{x}\right|^{2} \right)^{m_{j} } }  \right|\le C_{\alpha } (i) ; i=\{ 1,2,3\}
\label{eq1.106}\end{align}
then for
\begin{align}
    \left|\partial _{x}^{\alpha } \vec{u}(\vec{x})\right|=\sqrt{\sum _{i=1}^{3}\left|\frac{\partial ^{\alpha } u_{i} (\vec{x})}{\partial x^{\alpha } } \right|^{2}  }
\label{eq1.107}\end{align}
as per (\ref{eq1.106}) and (\ref{eq1.107})
\begin{align}
    \sum _{i=1}^{3}\left|\frac{\partial ^{\alpha } u_{i} (\vec{x})}{\partial x^{\alpha } } \right|^{2}  \le \sum _{i=1}^{3}C_{\alpha } (i)
\label{eq1.108}\end{align}
based on this, there must exists $C_{\alpha } \in R$ such that
\begin{align}
    C_{\alpha } \ge \sqrt{\sum _{i=1}^{3}C_{\alpha } (i) }
\label{eq1.109}\end{align}
based on (\ref{eq1.109}, (\ref{eq1.108}) and (\ref{eq1.107}):
\begin{align}
    \left|\partial _{x}^{\alpha } \vec{u}(\vec{x})\right|\le C_{\alpha }
\label{eq1.110}\end{align}
for any $\{\alpha > 0\}\in N$, $x \in R^{3}$.
\newline\newline
\textbf{Continuous differentiability of force field $\vec{f}(\vec{x},t)$}
\newline
As per the theorem statement, $f_{3}$, the component of $\vec{f}$ is
\begin{align}
    f_{3} (\vec{x},t)=\frac{1}{1+t^{2} \left(\sum _{j=1}^{3}x_{j}  \right)^{2} }
\label{eq1.120}\end{align}
for any $x \in R^{3}$, $t \ge 0$.
Let us define the denominator of statement (\ref{eq1.120}) as:
\begin{align}
    B=1+t^{2} \left(\sum _{j=1}^{3}x_{j}  \right)^{2}
\label{eq1.121}\end{align}
applying (\ref{eq1.121}) in (\ref{eq1.120}):
\begin{align}
    f_{3} (\vec{x},t)=\frac{1}{B}
\label{eq1.122}\end{align}
Let us apply partial differential by $t$ on (\ref{eq1.122}):
\begin{align}
    \frac{\partial }{\partial t} f_{3} (\vec{x},t)=\frac{\partial }{\partial t} \frac{1}{B}
\label{eq1.123}\end{align}
\begin{align}
    \frac{\partial }{\partial t} f_{3} (\vec{x},t)=-\frac{1}{B^{2} } \frac{\partial B}{\partial t}
\label{eq1.124}\end{align}
as per (\ref{eq1.121}):
\begin{align}
    \frac{\partial B}{\partial t} =\frac{\partial }{\partial t} \left(1+t^{2} \left(\sum _{j=1}^{3}x_{j}  \right)^{2} \right)=2t\left(\sum _{j=1}^{3}x_{j}  \right)^{2}
\label{eq1.125}\end{align}
applying (\ref{eq1.125})in (\ref{eq1.124}):
\begin{align}
    \frac{\partial }{\partial t} f_{3} (\vec{x},t)=-\frac{2t}{B^{2} } \left(\sum _{j=1}^{3}x_{j}  \right)^{2}
\label{eq1.126}\end{align}
Let us define
\begin{align}
    H=\sum _{j=1}^{3}x_{j}
\label{eq1.127}\end{align}
applying (\ref{eq1.127})in (\ref{eq1.126}):
\begin{align}
    \frac{\partial }{\partial t} f_{3} (\vec{x},t)=-\frac{2t}{B^{2} } H^{2}
\label{eq1.128}\end{align}
let us find the second partial derivative by $t$:
\begin{align}
    \frac{\partial ^{2} }{\partial t^{2} } f_{3} (\vec{x},t)=\frac{\partial }{\partial t} \left(-\frac{2t}{B^{2} } SH^{2} \right)
\label{eq1.129}\end{align}
as $H=\sum _{j=1}^{3}x_{j}$ is not in the function of time:
\begin{align}
    \frac{\partial ^{2} }{\partial t^{2} } f_{3} (\vec{x},t)=-2H^{2} \frac{\partial }{\partial t} \left(\frac{t}{B^{2} } \right)
\label{eq1.130}\end{align}
\begin{align}
    \frac{\partial ^{2} }{\partial t^{2} } f_{3} (\vec{x},t)=-2H^{2} \frac{1}{B^{4} } \left(B^{2} \frac{\partial t}{\partial t} -t2B\frac{\partial B}{\partial t} \right)
\label{eq1.132}\end{align}
as per statements (\ref{eq1.121}) and (\ref{eq1.127}), $\frac{\partial B}{\partial t}$ is:
\begin{align}
    \frac{\partial B}{\partial t} =2t\left(\sum _{j=1}^{3}x_{j}  \right)^{2} =2tH^{2}
\label{eq1.133}\end{align}
once (\ref{eq1.133}) is applied in (\ref{eq1.132}):
\begin{align}
    \frac{\partial ^{2} }{\partial t^{2} } f_{3} (\vec{x},t)=-2H^{2} \frac{1}{B^{4} } \left(B^{2} -t2B2tH^{2} \right)
\label{eq1.134}\end{align}
\begin{align}
    \frac{\partial ^{2} }{\partial t^{2} } f_{3} (\vec{x},t)=-2H^{2} \frac{1}{B^{4} } \left(B^{2} -4t^{2} BH^{2} \right)
\label{eq1.135}\end{align}
\begin{align}
    \frac{\partial ^{2} }{\partial t^{2} } f_{3} (\vec{x},t)=-2\frac{H^{2} }{B^{2} } +8\frac{t^{2} H^{4} }{B^{3} }
\label{eq1.136}\end{align}
Similarly, the third partial derivative by $t$ is:
\begin{align}
    \frac{\partial ^{3} }{\partial t^{3} } f_{3} (\vec{x},t)=-48\frac{t^{3} H^{6} }{B^{4} } +24\frac{tH^{4} }{B^{3} }
\label{eq1.137}\end{align}
The general form of $m$-th  partial derivative by $t$ can be expressed as:
\begin{align}
    \frac{\partial ^{m} }{\partial t^{m} } f_{3} (\vec{x},t)=\sum _{l=1}^{\left\lfloor \frac{m}{2} \right\rfloor +1}(-1)^{\left\| m-\left\lfloor \frac{m}{2} +1\right\rfloor +l\right\| _{2} } \frac{t^{\left\| m\right\| _{2} +2(l-1)} H^{2\left(m-\left\lfloor \frac{m}{2} +1\right\rfloor +l\right)} }{B^{\left\lfloor \frac{m-1}{2} \right\rfloor +l+1} }
\label{eq1.138}\end{align}
for any $x \in R^{3}$, $t \ge 0$,
where
\begin{align}
    h=\left\| m\right\| _{2} +2(l-1)
\label{eq1.139}\end{align}
\begin{align}
    e=2\left(m-\left\lfloor \frac{m}{2} +1\right\rfloor +l\right)
\label{eq1.140}\end{align}
\begin{align}
    s=\left\lfloor \frac{m-1}{2} \right\rfloor +l+1
\label{eq1.141}\end{align}
once (\ref{eq1.139}), (\ref{eq1.140}), (\ref{eq1.141})are applied in (\ref{eq1.138}):
\begin{align}
    \frac{\partial ^{m} }{\partial t^{m} } f_{3} (\vec{x},t)=\sum _{l=1}^{\left\lfloor \frac{m}{2} \right\rfloor +1}(-1)^{\left\| \frac{e}{2} \right\| _{2} } \frac{t^{h} S^{e} }{B^{s} }
\label{eq1.142}\end{align}
for any $x \in R^{3}$, $t \ge 0$, $\{m>0\} \in N$.
In order to demonstrate that statement (\ref{eq1.142}) correctly represents the partial derivation for any order of differentiation $\{m>0\} \in N$, let us apply a few orders of partial derivatives by $t$ using statement above:
\begin{align}
    \frac{\partial }{\partial t} f_{3} (\vec{x},t)=-2\frac{tH^{2} }{B^{2} }
\label{eq1.143}\end{align}
\begin{align}
    \frac{\partial ^{2} }{\partial t^{2} } f_{3} (\vec{x},t)=-2\frac{H^{2} }{B^{2} } +8\frac{t^{2} H^{4} }{B^{3} }
\label{eq1.144}\end{align}
\begin{align}
    \frac{\partial ^{3} }{\partial t^{3} } f_{3} (\vec{x},t)=-48\frac{t^{3} H^{6} }{B^{4} } +24\frac{tH^{4} }{B^{3} }
\label{eq1.145}\end{align}
\begin{align}
    \frac{\partial ^{4} }{\partial t^{4} } f_{3} (\vec{x},t)=384\frac{t^{4} H^{8} }{B^{5} } -288\frac{t^{2} H^{6} }{B^{4} } +24\frac{H^{4} }{B^{3} }
\label{eq1.146}\end{align}
\begin{align}
    \frac{\partial ^{5} }{\partial t^{5} } f_{3} (\vec{x},t)=-3840\frac{t^{5} H^{10} }{B^{6} } +3840\frac{t^{3} H^{8} }{B^{5} } -720\frac{tH^{6} }{B^{4} }
\label{eq1.147}\end{align}
Statements from (\ref{eq1.143}) to (\ref{eq1.147}) match the partial derivatives by $t$ performed in incremental fashion on the basis of the previous order of the partial derivative, confirming that (\ref{eq1.142}) represents the general form of partial derivative by $t$ or any order $\{m>0\} \in N$.
\newline
As in statement (\ref{eq1.142}) $B$ as per (\ref{eq1.121}) is definite positive, then $\frac{\partial ^{m} }{\partial t^{m} } f_{3} (\vec{x},t)$ is continuous for any $\{ m > 0\} \in N$ .As $\vec{f}(\vec{x},t)=(0,0,f_{3} )$ we can conclude that  $\frac{\partial ^{m} }{\partial t^{m} } \vec{f}(\vec{x},t)$ is also continuous for any $\{ m > 0\} \in N$ . Therefore, we can conclude that the vector field $\vec{f}(\vec{x},t)$ is continuously differentiable  $\vec{f}(\vec{x},t)\in C^{\infty } $ for any position $\vec{x}\in R^{3} $ and $t\ge 0$.
\newline\newline
\textbf{Determining $\deg_{x} (\frac{\partial ^{m} }{\partial t^{m}}\vec{f})$ and $\deg_{t} (\frac{\partial ^{m} }{\partial t^{m}}\vec{f})$}
\newline
Let us start with an analysis of $\deg _{x}(\frac{\partial ^{m}}{\partial t^{m}}\vec{f})$. Per (\ref{eq1.142}), expanding $H$ and $B$ as per (\ref{eq1.127}) and (\ref{eq1.121}):
\begin{align}
    \deg _{x} (\frac{\partial ^{m}f_{3}}{\partial t^{m} } )=Max(\deg _{x} (\frac{t^{k} \left(\sum _{j=1}^{3}x_{j}  \right)^{e} }{\left(1+t^{2} \left(\sum _{j=1}^{3}x_{j}  \right)^{2} \right)^{s} } ))_{l=\{ 1,\left\lfloor \frac{m}{2} \right\rfloor +1\} }
\label{eq1.150}\end{align}
\begin{align}
    \deg _{x} (\frac{\partial ^{m}f_{3}}{\partial t^{m} })=Max(\deg _{x} (\frac{\left(\sum _{j=1}^{3}x_{j}  \right)^{e} }{\left(\sum _{j=1}^{3}x_{j}  \right)^{2s} } ))_{l=\{ 1,\left\lfloor \frac{m}{2} \right\rfloor +1\} }
\label{eq1.150.2}\end{align}
\begin{align}
    \deg _{x} (\frac{\partial ^{m}f_{3} }{\partial t^{m}})=Max\left(e-2s\right)_{l=\{ 1,\left\lfloor \frac{m}{2} \right\rfloor +1\} }
\label{eq1.150.3}\end{align}
expanding $e$ and $s$, as per (\ref{eq1.140}) and (\ref{eq1.141}) in (\ref{eq1.150.3}):
\begin{align}
    \deg _{x} (\frac{\partial ^{m}f_{3} }{\partial t^{m}})=Max\left(2\left(m-\left\lfloor \frac{m}{2} +1\right\rfloor +l\right)-2(\left\lfloor \frac{m-1}{2} \right\rfloor +l+1)\right)_{l=\{ 1,\left\lfloor \frac{m}{2} \right\rfloor +1\} }
\label{eq1.152}\end{align}
\begin{align}
    \deg _{x} (\frac{\partial ^{m}f_{3}}{\partial t^{m}})=Max\left(2m-2\left\lfloor \frac{m}{2} +1\right\rfloor -2\left\lfloor \frac{m-1}{2} \right\rfloor -2\right)_{l=\{ 1,\left\lfloor \frac{m}{2} \right\rfloor +1\} }
\label{eq1.154}\end{align}
Statement (\ref{eq1.154}) is not in function of $l$, then (\ref{eq1.154}) is equivalent to
\begin{align}
    \deg _{x} (\frac{\partial ^{m}f_{3}}{\partial t^{m} })=2m-2\left\lfloor \frac{m}{2} +1\right\rfloor -2\left\lfloor \frac{m-1}{2} \right\rfloor -2
\label{eq1.155}\end{align}
Let us evaluate statement (\ref{eq1.155}) for a few consecutive values of $m$:
\newline
\[m=1 \Longrightarrow \deg _{x}(\frac{\partial f_{3}}{\partial t })=2-2-2=-2\]
\[m=2 \Longrightarrow \deg _{x}(\frac{\partial ^{2}f_{3}}{\partial t^{2}})=4-4-2=-2\]
\[m=3 \Longrightarrow \deg _{x}(\frac{\partial ^{3}f_{3}}{\partial t^{3}})=6-4-2-2=-2\]
\[m=4 \Longrightarrow \deg _{x}(\frac{\partial ^{4}f_{3}}{\partial t^{4}})=8-6-2-2=-2\]
\[m=5 \Longrightarrow \deg _{x}(\frac{\partial ^{5}f_{3}}{\partial t^{5}})=10-6-4-2=-2\]
Therefore, for any order $m$ of partial differentiations by time, the degree by $x$ for $\frac{\partial ^{m} }{\partial t^{m} } f_{3} (\vec{x},t)$can be expressed as
\begin{align}
    \deg _{x} (\frac{\partial ^{m} }{\partial t^{m} } f_{3} (\vec{x},t))=-2
\label{eq1.156}\end{align}
for any $\{m>0\} \in N$.
\newline
As already concluded, the degree for $x$ is not in function of $l$ as used in sum of statement (\ref{eq1.142}) meaning that each of the terms $\frac{t^{h} H^{e} }{B^{s} }$ of the sum are of the same degree, or in other words:
\begin{align}
    \deg _{x} (\frac{\partial ^{m} }{\partial t^{m} } f_{3} (\vec{x},t))=\deg _{x}( \frac{t^{h} H^{e} }{B^{s}})=-2
\label{eq1.156.1}\end{align}
\newline
As the degree for $x$ is $-2$ regardless of $m$, then for $\vec{x}$ approaching infinity $\left|\vec{x}\right|\to \infty $
\begin{align}
    \lim _{\left|\vec{x}\right|\to \infty } \frac{\partial ^{m} }{\partial t^{m} } f_{3} (\vec{x},t)=0
\label{eq1.157}\end{align}
limit of $m$-th order of partial derivative by time must be equal to zero.
\noindent Also, at the coordinate origin $\vec{x}=\vec{0}$ :
\begin{align}
    \frac{\partial ^{m} }{\partial t^{m} } f_{3} (\vec{x},t)|_{\vec{x}=\vec{0}} =\sum _{l=1}^{\left\lfloor \frac{m}{2} \right\rfloor +1}(-1)^{\left\| \frac{e}{2} \right\| _{2} } \frac{t^{h} 0^{e} }{B^{s} }  =0
\label{eq1.158}\end{align}
$m$-th order of partial derivative by time is zero as well.
\newline
Now, let us check the degree of $t$ as per statement (\ref{eq1.150}):
\begin{align}
    \deg _{t} (\frac{\partial ^{m}f_{3}}{\partial t^{m}})=Max(\deg _{t} (\frac{t^{h} \left(\sum _{j=1}^{3}x_{j}  \right)^{e} }{\left(1+t^{2} \left(\sum _{j=1}^{3}x_{j}  \right)^{2} \right)^{s} } ))_{l=\{ 1,\left\lfloor \frac{m}{2} \right\rfloor +1\} }
\label{eq1.159}\end{align}
\begin{align}
    \deg _{t} (\frac{\partial ^{m}f_{3}}{\partial t^{m}})=Max(\deg _{t} (\frac{t^{h} }{t^{2s} } ))_{l=\{ 1,\left\lfloor \frac{m}{2} \right\rfloor +1\} }
\label{eq1.160}\end{align}
expanding $h$ and $s$, as per (\ref{eq1.139}) and (\ref{eq1.141}) in (\ref{eq1.160}):
\begin{align}
    \deg _{t} (\frac{\partial ^{m}f_{3}}{\partial t^{m}})=Max(\deg _{t} (\frac{t^{\left\| m\right\| _{2} +2(l-1)} }{t^{2(\left\lfloor \frac{m-1}{2} \right\rfloor +l+1)} } ))_{l=\{ 1,\left\lfloor \frac{m}{2} \right\rfloor +1\} }
\label{eq1.161}\end{align}
\begin{align}
    \deg _{t} (\frac{\partial ^{m}f_{3}}{\partial t^{m}})=Max(\left\| m\right\| _{2} +2(l-1)-2(\left\lfloor \frac{m-1}{2} \right\rfloor +l+1))_{l=\{ 1,\left\lfloor \frac{m}{2} \right\rfloor +1\} }
\label{eq1.162}\end{align}
\begin{align}
    \deg _{t} (\frac{\partial ^{m}f_{3}}{\partial t^{m}})=Max(\left\| m\right\| _{2} +2l-2-2\left\lfloor \frac{m-1}{2} \right\rfloor -2l-2)_{l=\{ 1,\left\lfloor \frac{m}{2} \right\rfloor +1\} }
\label{eq1.163}\end{align}
\begin{align}
    \deg _{t} (\frac{\partial ^{m}f_{3}}{\partial t^{m}})=Max(\left\| m\right\| _{2} -2\left\lfloor \frac{m-1}{2} \right\rfloor -4)_{l=\{ 1,\left\lfloor \frac{m}{2} \right\rfloor +1\} }
\label{eq1.164}\end{align}
As per (\ref{eq1.164}), the degree of $t$ does not depend on $l$. Therefore, (\ref{eq1.164}) is equivalently represented as:
\begin{align}
    \deg _{t} (\frac{\partial ^{m}f_{3}}{\partial t^{m}} )=\left\| m\right\| _{2} -2\left\lfloor \frac{m-1}{2} \right\rfloor -4
\label{eq1.165}\end{align}
for any $\{m>0\} \in N$.
Let us evaluate statement (\ref{eq1.165}) for a few consecutive values of $m$:
\[\deg _{t} (\frac{\partial ^{m}f_{3}}{\partial t^{m}})_{m=1} =\left[\left\| m\right\| _{2} -2\left\lfloor \frac{m-1}{2} \right\rfloor -4\right]_{m=1} =1-0-4=-3\]
\[\deg _{t} (\frac{\partial ^{m}f_{3}}{\partial t^{m}})_{m=2} =\left[\left\| m\right\| _{2} -2\left\lfloor \frac{m-1}{2} \right\rfloor -4\right]_{m=2} =0-0-4=-4\]
\[\deg _{t} (\frac{\partial ^{m}f_{3}}{\partial t^{m}})_{m=3} =\left[\left\| m\right\| _{2} -2\left\lfloor \frac{m-1}{2} \right\rfloor -4\right]_{m=3} =1-2-4=-5\]
\[\deg _{t} (\frac{\partial ^{m}f_{3}}{\partial t^{m}})_{m=4} =\left[\left\| m\right\| _{2} -2\left\lfloor \frac{m-1}{2} \right\rfloor -4\right]_{m=4} =0-2-4=-6\]
\[\deg _{t} (\frac{\partial ^{m}f_{3}}{\partial t^{m}})_{m=5} =\left[\left\| m\right\| _{2} -2\left\lfloor \frac{m-1}{2} \right\rfloor -4\right]_{m=5} =1-4-4=-7\]
Based on this, the general form of  $\deg _{x}(\frac{\partial ^{m}}{\partial t^{m}}\vec{f})$ can be expressed as:
\begin{align}
    \deg _{t} (\frac{\partial ^{m}f_{3} }{\partial t^{m} })=-m-2
\label{eq1.166}\end{align}
for any $\{m>0\} \in N$.
As per statements (\ref{eq1.142}) and (\ref{eq1.166}) for $t$ approaching infinity $t\to \infty $
\begin{align}
    \lim _{t\to \infty } \frac{\partial ^{m} }{\partial t^{m} } f_{3} (\vec{x},t)=0
\label{eq1.167}\end{align}
for any $\{m>0\} \in N$.
The limit of the $m$-th order of the partial derivative by $t$ must be equal to zero.
Also, when $t=0$ :
\begin{align}
    \frac{\partial ^{m} }{\partial t^{m} } f_{3} (\vec{x},t)|_{t=0} =0
\label{eq1.168}\end{align}
for any $\{m>0\} \in N$.
\newline\newline
\textbf{Determining $ \deg _{x} (\frac{\partial ^{\alpha } }{\partial x^{\alpha } } \frac{\partial ^{m} }{\partial t^{m} }\vec{f})$}
\newline
Now, beginning with statement (\ref{eq1.142}), let us perform the partial differentiation by $x$ :
\begin{align}
    \frac{\partial }{\partial x} \frac{\partial ^{m} }{\partial t^{m} } f_{3} (\vec{x},t)=\frac{\partial }{\partial x} \sum _{l=1}^{\left\lfloor \frac{m}{2} \right\rfloor +1}(-1)^{\left\| \frac{e}{2} \right\| _{2} } \frac{t^{h} H^{e} }{B^{s} }
\label{eq1.169}\end{align}
\begin{align}
    \frac{\partial }{\partial x} \frac{\partial ^{m} }{\partial t^{m} } f_{3} (\vec{x},t)=\sum _{l=1}^{\left\lfloor \frac{m}{2} \right\rfloor +1}(-1)^{\left\| \frac{e}{2} \right\| _{2} } t^{h} \frac{\partial }{\partial x} \frac{H^{e} }{B^{s} }
\label{eq1.169.1}\end{align}
let us expand $\frac{\partial }{\partial x} \frac{H^{e} }{B^{s} }$ in statement (\ref{eq1.169.1})
\begin{align}
    \frac{\partial }{\partial x} \frac{H^{e} }{B^{s} } =\frac{1}{B^{2s} } \left(B^{s} \frac{\partial }{\partial x} H^{e} -H^{e} \frac{\partial }{\partial x} B^{s} \right)
\label{eq1.170}\end{align}
\begin{align}
    \frac{\partial }{\partial x} \frac{H^{e} }{B^{s} } =\frac{1}{B^{2s} } \left(B^{s} eH^{e-1} \frac{\partial H}{\partial x} -H^{e} sB^{s-1} \frac{\partial B}{\partial x} \right)
\label{eq1.171}\end{align}
applying partial derivative $ \frac{\partial}{\partial x}$ on $H$, as defined per (\ref{eq1.127}):
\begin{align}
    \frac{\partial H}{\partial x} =\frac{\partial }{\partial x} \sum _{j=1}^{3}x_{j}  =1
\label{eq1.172}\end{align}
applying partial derivative $ \frac{\partial}{\partial x}$ on $B$ as defined per (\ref{eq1.121}):
\begin{align}
    \frac{\partial B}{\partial x} =\frac{\partial}{\partial x}(1+t^{2} \left(\sum _{j=1}^{3}x_{j}\right)^{2})=2t^{2} \sum _{j=1}^{3}x_{j}  =2t^{2} H
\label{eq1.173}\end{align}
then applying (\ref{eq1.173}) and (\ref{eq1.172}) on (\ref{eq1.171}):
\begin{align}
    \frac{\partial }{\partial x} \frac{H^{e} }{B^{s} } =\frac{1}{B^{2s} } \left(B^{s} eH^{e-1} -H^{e} sB^{s-1} 2t^{2} H\right)
\label{eq1.174}\end{align}
\begin{align}
    \frac{\partial }{\partial x} \frac{H^{e} }{B^{s} } =\frac{1}{B^{2s} } \left(eB^{s} H^{e-1} -2st^{2} B^{s-1} H^{e+1} \right)
\label{eq1.175}\end{align}
\begin{align}
    \frac{\partial }{\partial x} \frac{H^{e} }{B^{s} } =\frac{eH^{e-1} }{B^{s} } -\frac{2st^{2} H^{e+1} }{B^{s+1} }
\label{eq1.177}\end{align}
once (\ref{eq1.177}) is applied in (\ref{eq1.169.1}):
\begin{align}
    \frac{\partial }{\partial x} \frac{\partial ^{m} }{\partial t^{m} } f_{3} (\vec{x},t)=\sum _{l=1}^{\left\lfloor \frac{m}{2} \right\rfloor +1}(-1)^{\left\| \frac{e}{2} \right\| _{2} } \left(e\frac{t^{h} H^{e-1} }{B^{s} } -2s\frac{t^{h+2} H^{e+1} }{B^{s+1} } \right)
\label{eq1.179}\end{align}
Let us determine the degree for $x$ as per statement (\ref{eq1.179}).
\begin{align}
    \deg _{x} (\frac{\partial }{\partial x} \frac{\partial ^{m} }{\partial t^{m} } f_{3} (\vec{x},t))=\deg _{x} (\sum _{l=1}^{\left\lfloor \frac{m}{2} \right\rfloor +1}\left(e\frac{t^{h} H^{e-1} }{B^{s} } -2s\frac{t^{h+2} H^{e+1} }{B^{s+1} } \right) )
\label{eq1.180}\end{align}
Let us determine the degree of the first term in the brackets of sum above (\ref{eq1.180}). As per definition of $H$ by statement (\ref{eq1.127}):
\begin{align}
    \deg _{x}(H)=\deg _{x}(\sum _{j=1}^{3}x_{j}) = 1
\label{eq1.180.1}\end{align}
based on (\ref{eq1.180.1}) and the first term of the sum in statement (\ref{eq1.180}):
\begin{align}
    \deg _{x} (\frac{t^{h} H^{e-1} }{B^{s} } )=\deg _{x} (\frac{t^{h} H^{e} }{B^{s} } )-\deg _{x} (H)=\deg _{x} (\frac{t^{h} H^{e} }{B^{s} } )-1
\label{eq1.181}\end{align}
as per statement (\ref{eq1.156.1}):
\begin{align}
    \deg _{x} (\frac{t^{h} H^{e} }{B^{s} } )=-2
\label{eq1.182}\end{align}
therefore
\begin{align}
    \deg _{x} (\frac{t^{h} H^{e-1} }{B^{s} } )=\deg _{x} (\frac{t^{h} H^{e} }{B^{s} } )-1=-3
\label{eq1.183}\end{align}
Let us analyze the degree of the second term $\frac{t^{h} H^{e+1} }{B^{s+1} }$ in the brackets of the sum in statement (\ref{eq1.180}):
\begin{align}
    \deg _{x} (\frac{t^{h} H^{e+1} }{B^{s+1} } )= \deg _{x} (\frac{t^{h} H^{e} }{B^{s} } )+\deg_{x}(H)-\deg_{x}(B)
\label{eq1.184}\end{align}
As per definition of $B$ by statement (\ref{eq1.121}):
\begin{align}
    \deg _{x}(B)=\deg _{x}(1+t^{2} \left(\sum _{j=1}^{3}x_{j} \right)^{2})=\deg _{x}(\left(\sum _{j=1}^{3}x_{j} \right)^{2})=2
\label{eq1.184.1}\end{align}
applying (\ref{eq1.184.1}) and (\ref{eq1.180.1}) in (\ref{eq1.184}), we can conclude:
\begin{align}
    \deg _{x} (\frac{t^{h} H^{e+1} }{B^{s+1} } )=\deg _{x} (\frac{t^{h} H^{e} }{B^{s} } )+1-2=-3
\label{eq1.184.2}\end{align}
As per (\ref{eq1.184.2}) and (\ref{eq1.183}), we can conclude that both terms within brackets of the sum in statement (\ref{eq1.180}) are of same degree $-3$ for $x$.
In addition, the degree for $x$ for statement (\ref{eq1.180}) is not in function of $l$ as used in the sum of the statement. Based on this, we conclude that the derived degree for $x$ for all of the terms of sum in the statement (\ref{eq1.180}) are mutually equal and applicable for the whole statement (\ref{eq1.180}):
\begin{align}
    \deg _{x} (\frac{\partial }{\partial x} \frac{\partial ^{m} }{\partial t^{m} } f_{3} (\vec{x},t))=-3
\label{eq1.186}\end{align}
equivalently, for partial derivatives by $x$ of second degree:
\begin{align}
    \deg _{x} (\frac{\partial ^{2} }{\partial x^{2} } \frac{\partial ^{m} }{\partial t^{m} } f_{3} (\vec{x},t))=(\deg _{x} (\frac{t^{h} S^{e} }{B^{s} } )-1)-1=-4
\label{eq1.187}\end{align}
equivalently, for partial derivatives by $x$ of third degree:
\begin{align}
    \deg _{x} (\frac{\partial ^{3} }{\partial x^{3} } \frac{\partial ^{m} }{\partial t^{m} } f_{3} (\vec{x},t))=((\deg _{x} (\frac{t^{h} S^{e} }{B^{s} } )-1)-1)-1=-5
\label{eq1.188}\end{align}
Based on (\ref{eq1.186}),(\ref{eq1.187}),(\ref{eq1.188}), the general form for the degree of $x$ for any order of the partial derivative $\alpha$ by x can be expressed as
\begin{align}
    \deg _{x} (\frac{\partial ^{\alpha } }{\partial x^{\alpha } } \frac{\partial ^{m} }{\partial t^{m} } f_{3} (\vec{x},t))=-\alpha -2
\label{eq1.189}\end{align}
\newline\newline
\textbf{Determining $ \deg _{t} (\frac{\partial ^{\alpha } }{\partial x^{\alpha } } \frac{\partial ^{m} }{\partial t^{m} }\vec{f})$}
\newline
Now that we have determined the degree for $x$ in case of any order $\{\alpha>0\} \in N$ of the partial derivative by $x$, and any order $\{m>0\} \in N$ of the partial derivative by $t$, let us determine the degree for $t$ as well. Beginning in incremental fashion, let us determine the degree for $x$, starting from statement (\ref{eq1.179}):
\begin{align}
    \deg _{t} (\frac{\partial }{\partial x} \frac{\partial ^{m} }{\partial t^{m} } f_{3} (\vec{x},t))=\deg _{t} (\sum _{l=1}^{\left\lfloor \frac{m}{2} \right\rfloor +1}\left(e\frac{t^{h} H^{e-1} }{B^{s} } -2s\frac{t^{h+2} H^{e+1} }{B^{s+1} } \right) )
\label{eq1.190}\end{align}
Let us determine the degree of the first term in the brackets of the sum in (\ref{eq1.190}). As per definition of $H$ by statement (\ref{eq1.127}):
\begin{align}
    \deg _{t}(H)=\deg _{t}(\sum _{j=1}^{3}x_{j}) = 0
\label{eq1.190.1}\end{align}
based on (\ref{eq1.190.1}) and first term of sum in statement (\ref{eq1.190}):
\begin{align}
    \deg _{t} (\frac{t^{h} H^{e-1} }{B^{s} } )=\deg _{t} (\frac{t^{h}}{B^{s}})+(e-1)\deg _{t} (H)=\deg _{t} (\frac{t^{h}}{B^{s}})
\label{eq1.190.2}\end{align}
As per definition of $B$ by statement (\ref{eq1.121}):
\begin{align}
    \deg _{t}(B)=\deg _{t}(1+t^{2} \left(\sum _{j=1}^{3}x_{j}  \right)^{2})=2
\label{eq1.190.3}\end{align}
as per (\ref{eq1.190.3}) and (\ref{eq1.190.2})
\begin{align}
    \deg _{t} (\frac{t^{h} H^{e-1} }{B^{s} } )= \deg _{t} (\frac{t^{h}}{B^{s}}) = \deg _{t}({t^{h}}) - \deg _{t}({B^{s}})= \deg _{t}({t^{h}}) -s \deg _{t}(B)=h-2s
\label{eq1.190.4}\end{align}
once $h$ and $s$ are expanded in (\ref{eq1.190.4}) as per (\ref{eq1.139}) and (\ref{eq1.141}):
\begin{align}
    \deg _{t} (\frac{t^{h} H^{e-1} }{B^{s} } )=h-2s=\left\| m\right\| _{2} +2(l-1) - 2(\left\lfloor \frac{m-1}{2} \right\rfloor +l+1)
\label{eq1.190.5}\end{align}
\begin{align}
    \deg _{t} (\frac{t^{h} H^{e-1} }{B^{s} } )=h-2s=\left\| m\right\| _{2} +2l-2- 2\left\lfloor \frac{m-1}{2} \right\rfloor -2l-2
\label{eq1.190.6}\end{align}
\begin{align}
    \deg _{t} (\frac{t^{h} H^{e-1} }{B^{s} })=h-2s=\left\| m\right\| _{2} - 2\left\lfloor \frac{m-1}{2} \right\rfloor -4
\label{eq1.190.7}\end{align}
Let us evaluate statement (\ref{eq1.190.7}) for a few consecutive values of $m$:
\[\deg _{t} (\frac{t^{h} H^{e-1} }{B^{s} })_{m=1}=1-0-4=-3\]
\[\deg _{t} (\frac{t^{h} H^{e-1} }{B^{s} })_{m=2}=0-0-4=-4\]
\[\deg _{t} (\frac{t^{h} H^{e-1} }{B^{s} })_{m=3}=1-2-4=-5\]
\[\deg _{t} (\frac{t^{h} H^{e-1} }{B^{s} })_{m=4}=0-2-4=-6\]
\[\deg _{t} (\frac{t^{h} H^{e-1} }{B^{s} })_{m=5}=1-4-4=-7\]
which could be in general form expressed as:
\begin{align}
    \deg _{t} (\frac{t^{h} H^{e-1} }{B^{s} })=-m-2
\label{eq1.190.8}\end{align}
Let us determine the degree of the second term in the brackets of the sum in (\ref{eq1.190}):
\begin{align}
    \deg _{t} (2s\frac{t^{h+2} H^{e+1} }{B^{s+1}})= \deg _{t} (\frac{t^{h} H^{e-1} }{B^{s}})+\deg _{t} (\frac{t^{2} H^{2} }{B^{1}})
\label{eq1.192}\end{align}
once (\ref{eq1.190.8}), (\ref{eq1.190.1}) and (\ref{eq1.190.3}) are applied on (\ref{eq1.192}):
\begin{align}
    \deg _{t} (2s\frac{t^{h+2} H^{e+1} }{B^{s+1}})= -m-2+2+0-2=-m-2
\label{eq1.192.1}\end{align}
As per (\ref{eq1.192.1}) and (\ref{eq1.190.8}), we can conclude that both terms within the brackets of the sum in statement (\ref{eq1.190}) are of same degree $-m-2$ for $t$.
In addition, the degree for $t$ is not in function of $l$ used in the sum of statement (\ref{eq1.190}). Based on this, we can conclude that the degree derived for $t$ is applicable to the whole statement (\ref{eq1.190}):
\begin{align}
    \deg _{t} (\frac{\partial }{\partial x} \frac{\partial ^{m} }{\partial t^{m} } f_{3} (\vec{x},t))=-m-2
\label{eq1.193}\end{align}
For each next derivative by x, the same process can be applied. Therefore, the degree of $t$  can be generalized as:
\begin{align}
    \deg _{t} (\frac{\partial ^{\alpha } }{\partial x^{\alpha } } \frac{\partial ^{m} }{\partial t^{m} } f_{3} (\vec{x},t))=-m-2
\label{eq1.194}\end{align}
for any $\{\alpha > 0\} \in N$, $\{m > 0\} \in N$, $\vec{x} \in R^{3}$.
Statements (\ref{eq1.194}) and (\ref{eq1.189}) can be more compactly expressed as:
\begin{align}
    \deg _{x} (\partial _{x}^{\alpha } \partial _{t}^{m} f_{3} (\vec{x},t))=-\alpha -2
\label{eq1.195}\end{align}
\begin{align}
    \deg _{t} (\partial _{x}^{\alpha } \partial _{t}^{m} f_{3} (\vec{x},t))=-m-2
\label{eq1.196}\end{align}
as per (\ref{eq1.195}) and (\ref{eq1.196}):
\begin{align}
    \lim _{\left|\vec{x}\right|\to \infty } \partial _{x}^{\alpha } \partial _{t}^{m} f_{3} (\vec{x},t)=0
\label{eq1.197}\end{align}
\begin{align}
    \lim _{t\to \infty } \partial _{x}^{\alpha } \partial _{t}^{m} f_{3} (\vec{x},t)=0
\label{eq1.198}\end{align}
As $H$ is defined as:
\begin{align}
    H=\sum _{j=1}^{3}x_{j}
\label{eq1.199}\end{align}
At at the coordinate origin where $x_{j}=0;j=\{1,2,3\}$ their sum must be zero $H=0$. As numerators of each of the resulting terms for the partial derivative $\partial _{x}^{\alpha } \partial _{t}^{m} f_{3} (\vec{x},t)$ has to include minimally $H$ to the power of 1, then we can conclude that all resulting terms of the partial derivative $\partial _{x}^{\alpha } \partial _{t}^{m} f_{3} (\vec{x},t)$ must be zero at the coordinate origin $\vec{x}=\vec{0}$:
\begin{align}
    \partial _{x}^{\alpha } \partial _{t}^{m} f_{3} (\vec{x},t)|_{\vec{x}=\vec{0}} =0
\label{eq1.200}\end{align}
for any $\{\alpha > 0\} \in N$, $\{m > 0\} \in N$, $\vec{x} \in R^{3}$.
Also, as each of the resulting terms for the partial derivative $\partial _{x}^{\alpha } \partial _{t}^{m} f_{3} (\vec{x},t)$  has to include minimally $t$ to the power of 1, then we can conclude that all resulting terms of the partial derivative $\partial _{x}^{\alpha } \partial _{t}^{m} f_{3} (\vec{x},t)$ must be zero when $t=0$:
\begin{align}
    \partial _{x}^{\alpha } \partial _{t}^{m} f_{3} (\vec{x},t)|_{t=0} =0
\label{eq1.201}\end{align}
for any $\{\alpha > 0\} \in N$, $\{m > 0\} \in N$, $\vec{x} \in R^{3}$.
The force vector field $\vec{f}$, as per the theorem definition is:
\begin{align}
    \vec{f}=(0,0,f_{3} (\vec{x},t))
\label{eq1.202}\end{align}
for any $\vec{x} \in R^{3}$, $t \geq 0$.
Then applying partial derivatives $\partial _{x}^{\alpha } \partial _{t}^{m}$ on $\vec{f}$:
\begin{align}
    \partial _{x}^{\alpha } \partial _{t}^{m} \vec{f}=(0,0,\partial _{x}^{\alpha } \partial _{t}^{m} f_{3} (\vec{x},t))
\label{eq1.203}\end{align}
as per (\ref{eq1.203}):
\begin{align}
    \left|\vec{f}\right|=f_{3} (\vec{x},t)
\label{eq1.203.1}\end{align}
therefore
\begin{align}
    \left|\partial _{x}^{\alpha } \partial _{t}^{m} \vec{f}\right|=\partial _{x}^{\alpha } \partial _{t}^{m} f_{3} (\vec{x},t)
\label{eq1.204}\end{align}
as per (\ref{eq1.197}):
\begin{align}
      \lim _{\left|\vec{x}\right|\to \infty } \left|\partial _{x}^{\alpha } \partial _{t}^{m} \vec{f}\right|=0
\label{eq1.207.1}\end{align}
also, as per (\ref{eq1.200}):
\begin{align}
       \left|\partial _{x}^{\alpha } \partial _{t}^{m} \vec{f}\right|_{\vec{x}=\vec{0}}=0
\label{eq1.207.2}\end{align}
\newline\newline
\textbf{Deriving solution for pressure $p(\vec{x},t)$}
\newline
The Navier-Stokes equation for incompressible fluid is expressed in following form:
\begin{align}
    \frac{\partial \vec{u}}{\partial t} +(\vec{u}\cdot \nabla )\vec{u}=-\frac{\nabla p}{\rho } +\nu \Delta \vec{u}+\vec{f}
\label{eq1.230}\end{align}
for $\nabla\cdot\vec{u}=0$ at any position in space $\vec{x} \in R^{3}$ and any time $t\geq0$.
\newline
As per statement (\ref{eq1.19}) the vector field $\vec{u}$, as defined by this theorem, is divergence free  $\nabla\cdot\vec{u}=0$.
Once the fluid velocity vector field $\vec{u}$ and force field $\vec{f}$, as defined in the theorem statement, are applied in the Navier-Stokes equation (\ref{eq1.230}), the terms of the Navier-Stokes equation can be expressed in following way:
\begin{align}
    \frac{\partial \vec{u}}{\partial t} =\vec{0}
\label{eq1.231}\end{align}
for any $\vec{x} \in R^{3}$, $t\geq0$.
\newline
Let us define the diffusion term vector field components as
\begin{align}
    (\vec{u}\cdot \nabla)\vec{u}=(c_{1} ,c_{2} ,c_{3} )
\label{eq1.233.1}\end{align}
once $\vec{u}$, as per statement of this theorem, is applied in (\ref{eq1.233.1}), the diffusion term related vector field components $c_{i}$ are:
\begin{align}
    c_{i} =4\frac{\sum _{j=1}^{3}x_{j}  -3x_{i} }{\left(1+\sum _{j=1}^{3}x_{j}^{2}  \right)^{4} }
\label{eq1.233.1.1}\end{align}
for $i=\{ 1,2,3\}$ for any $\vec{x} \in R^{3}$, $t\geq0$.
\newline
Let us define the viscosity related term vector field components as
\begin{align}
    \nu \Delta \vec{u}=(l_{1} ,l_{2} ,l_{3} )
\label{eq1.233.2}\end{align}
once $\vec{u}$, as per statement of this theorem, is applied in (\ref{eq1.233.2}), viscosity term related vector field components $l_{i}$ are:
\begin{align}
    l_{i} =8\nu \frac{d_{i} \left(\left(\sum _{j=1}^{3}x_{j}^{2}  +1\right)-6\right)}{\left(\sum _{j=1}^{3}x_{j}^{2}  +1\right)^{4} }
\label{eq1.233.3}\end{align}
for $i=\{ 1,2,3\}$ and any $\vec{x} \in R^{3}$, $t\geq0$.
\newline
Let us express $\frac{\nabla p}{\rho}$ in form of the vector field components as:
\begin{align}
    \frac{\nabla p}{\rho}=(g_{1} ,g_{2} ,g_{3} )=\vec{g}
\label{eq1.235}\end{align}
Terms of the Navier-Stokes equation (\ref{eq1.230}) and as per (\ref{eq1.235}), can be rearranged in following way:
\begin{align}
    \vec{g}=\frac{\nabla p}{\rho }=\nu \Delta \vec{u}+\vec{f}-\frac{\partial \vec{u}}{\partial t} -(\vec{u}\cdot \nabla )\vec{u}
\label{eq1.235.1}\end{align}
Applying (\ref{eq1.231}), (\ref{eq1.233.1.1}), (\ref{eq1.233.3}) and $\vec{f}(\vec{x},t)$, as per the statement of this theorem in (\ref{eq1.235.1}), the resulting vector field components of $\vec{g}=\frac{\nabla p}{\rho}$ are:
\begin{align}
    g_{i} =\frac{8\nu d_{i} \left(\left(\sum _{j=1}^{3}x_{j}^{2}  +1\right)-6\right)-4\left(\sum _{j=1}^{3}x_{j}  -3x_{i} \right)}{\left(\sum _{j=1}^{3}x_{j}^{2}  +1\right)^{4} } +\frac{I(i)}{\left(1+\left(\sum _{j=1}^{3}x_{j}  \right)^{2} \right)\left(1+t\right)}
\label{eq1.236}\end{align}
for $i=\{ 1,2,3\} $ where $I(n)=\left\{\begin{array}{cc} {1} & {;n=3} \\ {0} & {;n\ne 3} \end{array}\right.$
\newline
Once  $g_{1} $ is integrated by $x_{1} $, the resulting pressure is:
\begin{align}
    p(\vec{x},t) = 4 \frac{12 \nu  x_{1} x_{2}- x_{1} x_{2}-12 \nu  x_{1} x_{3}- x_{1} x_{3}-2 x_{2}^2-2 x_{3}^2-2}{6 \left(x_{2}^2+ x_{3}^2+1\right) \left(1+\sum _{i=1}^{3}x_{i}\right)^3}-
\label{eq1.237}\end{align}
\[4 \frac{x_{1}\left(12 \nu  x_{2}^3-12 \nu  x_{2}^2 x_{3}-48 \nu  x_{2}+12 \nu  x_{2} x_{3}^2+5 x_{2}-12 \nu x_{3}^3+48 \nu x_{3}+5 x_{3}\right)}{16 \left(x_{2}^2+ x_{3}^2+1\right)^3 \left(1+\sum _{i=1}^{3}x_{i}\right)}-\]
\[4 \frac{ x_{1}\left(12 \nu  x_{2}^3-12 \nu  x_{2}^2 x_{3}-48 \nu  x_{2}+12 \nu  x_{2} x_{3}^2+5 x_{2}-12 \nu  x_{3}^3+48 \nu  x_{3}+5 x_{3}\right)}{24 \left(x_{2}^2+ x_{3}^2+1\right)^2 \left(1+\sum _{i=1}^{3}x_{i}\right)^2}+\]
\[4 \frac{\left(-12 \nu  x_{2}^3+12 \nu  x_{2}^2 x_{3}+48 \nu  x_{2}-12 \nu  x_{2} x_{3}^2-5 x_{2}+12 \nu  x_{3}^3-48 \nu  x_{3}-5 x_{3}\right) ArcTan\left(\frac{ x_{1}}{\sqrt{ x_{2}^2+ x_{3}^2+1}}\right)}{16 \left(x_{2}^2+ x_{3}^2+1\right)^{7/2}}+\]
\[C(y,z)+C\]
for any $\vec{x} \in R^{3}$, $t\geq0$.
Once  $g_{2} $ is integrated by$x_{2} $, the pressure is:
\begin{align}
    p(\vec{x},t) = -\frac{2 \left(2 x_{1}^2+12 \nu  x_{1} x_{2}+x_{1} x_{2}-12 \nu  x_{2} x_{3}+x_{2} x_{3}+2 x_{3}^2+2\right)}{3 \left(x_{1}^2+x_{3}^2+1\right) \left(1+\sum _{i=1}^{3}x_{i}\right)^3}+
\label{eq1.238}\end{align}
\[\frac{x_{2} \left(12 \nu  x_{1}^3-12 \nu  x_{1}^2 x_{3}-48 \nu  x_{1}+12 \nu  x_{1} x_{3}^2-5 x_{1}-12 \nu  x_{3}^3+48 \nu  x_{3}-5 x_{3}\right)}{4 \left(x_{1}^2+x_{3}^2+1\right)^3 \left(1+\sum _{i=1}^{3}x_{i}\right)}+\]
\[\frac{x_{2} \left(12 \nu  x_{1}^3-12 \nu  x_{1}^2 x_{3}-48 \nu  x_{1}+12 \nu  x_{1} x_{3}^2-5 x_{1}-12 \nu  x_{3}^3+48 \nu  x_{3}-5 x_{3}\right)}{6 \left(x_{1}^2+x_{3}^2+1\right)^2 \left(1+\sum _{i=1}^{3}x_{i}\right)^2}+\]
\[\frac{\left(12 \nu  x_{1}^3-12 \nu  x_{1}^2 x_{3}-48 \nu  x_{1}+12 \nu  x_{1} x_{3}^2-5 x_{1}-12 \nu  x_{3}^3+48 \nu  x_{3}-5 x_{3}\right) ArcTan \left(\frac{x_{2}}{\sqrt{x_{1}^2+x_{3}^2+1}}\right)}{4 \left(x_{1}^2+x_{3}^2+1\right)^{7/2}}+\]
\[C(x,z)+C\]
for any $\vec{x} \in R^{3}$, $t\geq0$.
Once  $g_{3} $ is integrated by $x_{3} $, the pressure is:
\begin{align}
    p(\vec{x},t) = \frac{\tan^{-1}(t (x_{1}+x_{2}+x_{3}))}{t}+
\label{eq1.239}\end{align}
\[\frac{2 \left(-2 x_{1}^2+12 \nu  x_{1} x_{3}-x_{1} x_{3}-2 x_{2}^2-12 \nu  x_{2} x_{3}-x_{2} x_{3}-2\right)}{3 \left(x_{1}^2+x_{2}^2+1\right) \left(1+\sum _{i=1}^{3}x_{i}\right)^3}-\]
\[\frac{x_{3} \left(12 \nu  x_{1}^3-12 \nu  x_{1}^2 x_{2}-48 \nu  x_{1}+12 \nu  x_{1} x_{2}^2+5 x_{1}-12 \nu  x_{2}^3+48 \nu  x_{2}+5 x_{2}\right)}{4 \left(x_{1}^2+x_{2}^2+1\right)^3 \left(1+\sum _{i=1}^{3}x_{i}\right)}-\]
\[\frac{x_{3} \left(12 \nu  x_{1}^3-12 \nu  x_{1}^2 x_{2}-48 \nu  x_{1}+12 \nu  x_{1} x_{2}^2+5 x_{1}-12 \nu  x_{2}^3+48 \nu  x_{2}+5 x_{2}\right)}{6 \left(x_{1}^2+x_{2}^2+1\right)^2 \left(1+\sum _{i=1}^{3}x_{i}\right)^2}+\]
\[\frac{\left(-12 \nu  x_{1}^3+12 \nu  x_{1}^2 x_{2}+48 \nu  x_{1}-12 \nu  x_{1} x_{2}^2-5 x_{1}+12 \nu  x_{2}^3-48 \nu  x_{2}-5 x_{2}\right) ArcTan \left(\frac{x_{3}}{\sqrt{x_{1}^2+x_{2}^2+1}}\right)}{4 \left(x_{1}^2+x_{2}^2+1\right)^{7/2}}+\]
\[C(x,y)+C\]
for any $\vec{x} \in R^{3}$, $t\geq0$.
Each of the statements (\ref{eq1.237}), (\ref{eq1.238}) and (\ref{eq1.239}) represent the solution for presure $p(\vec{x},t)$. All three results are mutually different.
In addition to that, the first term in statement (\ref{eq1.239}) is:
\begin{align}
    \frac{ArcTan(t (x_{1}+x_{2}+x_{3}))}{t}
\label{eq1.239.1}\end{align}
which at the point in time $t=0$, once applied in (\ref{eq1.239.1}), results with
\begin{align}
    \frac{ArcTan\left(0\right)}{0}{\rm =}\frac{0}{0}
\label{eq1.240}\end{align}
which cannot be determined for any $\vec{x} \in R^{3}$ at $t=0$.
\newline
Based on the three mutually different resulting equations for pressure (\ref{eq1.237}), (\ref{eq1.238}) and (\ref{eq1.239}), one of which (\ref{eq1.239}) incudes the term (\ref{eq1.240}), which cannot be determined at any position $\vec{x} \in R^{3}$ at $t=0$, we can conclude that the Navier-Stokes equation for incompressible fluid, for the velocity vector field $\vec{u}(\vec{x})$ and the external force related vector field $\vec{f}(\vec{x},t)$, as specified by the statement of this theorem, does not have solution at any position in space $\vec{x}\in R^{3} $ at $t=0$, which proves this theorem.
\end{proof}
\section{Discussion}
Results of the analysis performed, demonstrate and prove that there exists $\vec{u}(\vec{x})$ and $\vec{f}(\vec{x},t)$ smooth vector fields, such that the Navier-Stokes equation for incompressible fluid does not have solution for any position in $R^3$ space at $t=0$. Such a result strongy indicates that Navier-Stokes equation for incompressible fluid has to be better understood, in order to determine the root causes of obtained results.
\newline
As is well known, many fluids are not very compressible. Therefore, incompressability as a mathematical approximation might appear as a logical and reasonable approach for simplifying the mathematical modeling of fluid behavior over space and time, however, the interpretation of \textbf{physical reasonability}, as referred to by the Clay Mathematics Institute’s official existence and smoothness of the Navier-Stokes equation problem statement, implicitly includes the understanding that such mathematical model(s) must be in alignment with the recognized laws of physics.
\newline
All material bodies, as per the laws of physics, are to some degree compressible, regardless how small such incompressibility is. On the other hand, incompressibility, mathematically, does not allow for compressibility at all. The question is if such behavior of incompressible fluids mathematically modelled and included in the form of the incompressibility condition, represents behavior which cannot be supported by recognized laws of physics.
\newline
The hypothesis, which could be of use to be further explored, is that fluid incompressibility as an mathematical approximation may be beyond the boundaries of what is ’physically reasonable’ on a macroscopic scale in conjunction with the recognized laws of physics. If so, this might account for the difficulties of obtaining unreasonable fluid velocities and blow-ups, which would be worth exploring further.
\newpage

\end{document}